\newtheorem{thm}    {Theorem}
\newtheorem{lem}     {Lemma}
\newtheorem{corollary}  {Corollary}
\newtheorem{proposition}        {Proposition}
\newtheorem{define}     {Definition}
\def\argmax{\mathop{\rm argmax}}
\def\uni{\mathop{\rm uni}}
\def\cX{{\cal X}}
\def\rE{{\rm E}}
\def\Pr{{\rm Pr}}
\newcommand{\bF}{\mathbb{F}}
\newcommand{\sM}{\mathsf{M}}
\newcommand{\sL}{\mathsf{L}}
\def\QED{\mbox{\rule[0pt]{1.5ex}{1.5ex}}}
\def\endproof{\hspace*{\fill}~\QED\par\endtrivlist\unskip}
\def\Label#1{\label{#1}\ [\ #1\ ]\ }
\def\Label{\label}
\begin{document}
\title{Secure list decoding
\thanks{The material in this paper was presented in part at the 
IEEE International Symposium on Information Theory (ISIT2019),
Paris, France, July 7 -- 12, 2019 \cite{MH17-1}.}
}
\author{
Masahito~Hayashi~\IEEEmembership{Fellow,~IEEE}\thanks{Masahito Hayashi is with 
Shenzhen Institute for Quantum Science and Engineering, Southern University of Science and Technology,
Shenzhen, 518055, China,
Graduate School of Mathematics, Nagoya University, Nagoya, 464-8602, Japan,
Center for Quantum Computing, Peng Cheng Laboratory, Shenzhen 518000, China,
and the Centre for Quantum Technologies, National University of Singapore, 3 Science Drive 2, 117542, Singapore
(e-mail:hayashi@sustech.edu.cn, masahito@math.nagoya-u.ac.jp)}}
\date{}
\maketitle
\begin{abstract}
We propose a new concept of secure list decoding.
While the conventional list decoding requires that the list contains the transmitted message,
secure list decoding requires the following additional security conditions.
The first additional security condition is the impossibility of the correct decoding, i.e.,
the receiver cannot uniquely identify the transmitted message
even though the transmitted message is contained in the list. 
This condition can be trivially satisfied when the transmission rate is larger 
than the channel capacity.
The other additional security condition is the impossibility for the sender to estimate another element of the decoded list except for the transmitted message.
This protocol can be used for anonymous auction, which realizes
the anonymity for bidding.
\end{abstract}
\begin{IEEEkeywords}
list decoding; anonymous auction;
security condition; capacity region
\end{IEEEkeywords}

\section{Introduction}\Label{S1}
Relaxing the condition of the decoding process,
Elias \cite{Elias} and Wozencraft \cite{Wo} independently
introduced list decoding as the method to allow more than one element
as candidates of the message sent by the encoder at the decoder.
When one of these elements coincides with the true message, the decoding is regarded as successful.
The paper \cite{Guruswami}
discussed its algorithmic aspect.
In this formulation, Nishimura \cite{Nish} obtained 
the channel capacity by showing its strong converse part\footnote{The strong converse part is the argument that the average error goes to $1$ if the code has a transmission rate over the capacity.}.
That is, he showed that the transmission rate is less than the conventional capacity plus 
the rate of the list size, i.e., the number of list elements.
Then, the reliable transmission rate does not increase
even when list decoding is allowed
if the list size does not increase exponentially.
In the non-exponential case, these results were generalized by Ahlswede \cite{Ah}.
Further, the paper \cite{Haya} showed that 
the upper bound of capacity by Nishimura can be attained even if the list size increases exponentially.
When the number of lists is $\sL$,
the capacity can be achieved by choosing the same codeword for $\sL$ distinct messages.

However, the merit of increase in the list size
was not discussed sufficiently.
To get a merit of list coding, we need a code construction that is essentially different from the conventional coding.
Since the above capacity-achieving code construction does not have an essential difference from the conventional coding,
we need to rule out the above type of construction of list coding.
That is, to extract a merit of list decoding, we need additional parameters to characterize
the difference from the conventional code construction, which can be expected to rule out such a trivial construction.

In this paper, we propose a new concept of secure list decoding
by adding several security conditions, which can be considered as additional constraints.
To explain this protocol, we consider the following anonymous auction scenario, which realizes
the anonymity for bidding.
$\sM$ players participate in the auction for an item dealt by Bob,
and they have their distinct ID from $1$ to $\sM$.
\begin{description}
\item[(i)] (Bidding) Several players bid with their ID.
Each of them sends his/her ID via a noisy channel with his/her price.
Then, for each bid,
the dealer, Bob outputs $\sL$ ID numbers as the list. 
The list is required to contain the ID of the player making the bid.

\item[(ii)] (Purchasing) 
Assume that a player, Alice has an ID $M$ and her bidding price is highest.
She purchases the item from Bob by showing her ID $M$.

\end{description}
This scenario has the following requirements.
\begin{description}
\item[(a)] Bob wants to identify whether 
the person to purchase the item is the same as the person to bid the highest price.
That is, $M$ needs to be one of 
$\sL$ ID numbers $M_1, \ldots, M_{\sL} $ output by Bob in her bid.
\item[(b)] Alice wants to hide her ID $M$ at the bidding step (i). 
Hence, she will not be identified by Bob when she loses this auction.
\item[(c)] Bob wants to avoid the situation that two players show him the correct ID 
at purchasing Step (ii). That is, Alice cannot find another element among 
$\sL$ ID numbers $M_1, \ldots, M_{\sL} $ output by Bob in her bid
except for $M$.
\end{description}
The requirement (a) is the condition for the requirement for the conventional list decoding
while the requirements (b) and (c) are not considered in the conventional list decoding.
In this paper, as a new concept to satisfy these conditions, we propose secure list decoding
by imposing the following two additional conditions to the list decoding.
The first additional security condition is the impossibility of the correct decoding.
That is, the receiver cannot uniquely identify the transmitted message
even though the transmitted message is contained in the list. 
This condition can be trivially satisfied when the transmission rate is larger than the channel capacity
due to the strong converse property
under the asymptotic setting with the discrete memory less channel.
The other additional security condition is the impossibility for the sender to estimate another element of the decoded list except for the transmitted message.
In fact, we might use an authentication protocol to identify Alice \cite{KR94}.
In this case, if Alice gives the key for the authentication to the third party,
the third party can claim to Bob that he is also the winner of this auction.
To avoid this type of spoofing, we need to use the ID number. 
That is, the above anonymous auction scenario realizes a kind of authentication, 
which satisfies the anonymity
and forbids spoofing even when Alice colludes the third party.
In this paper, we formulate secure list decoding,
and define various types of capacity regions for secure list decoding
under the asymptotic setting with the discrete memory less channel.
Then, we calculate these capacity regions under several conditions.

This paper is structured as follows. 
Section \ref{S2-1} gives the formulation of secure list decoding.
Section \ref{BC} explains the relation with bit commitment.
Section \ref{S51} prepares several information quantities.
Section \ref{S51} states the main result by deriving the capacity regions.
Section \ref{S-C} shows the converse part, and 
Section \ref{S-K} proves the direct part.

\section{Problem setting}\Label{S2}
\subsection{Our setting with intuitive description}\Label{S2-1}
To realize the requirements (a), (b), and (c) mentioned in Section \ref{S1}, 
we formulate the mathematical conditions for the protocol for 
a given channel $W$ from the discrete system ${\cal X}$ to the other system ${\cal Y}$
with integers $\sL < \sM$ and security parameters $\epsilon_A,\delta_B,\delta_C$.
In the following, we describe the condition in an intuitive form in the first step. Later, we 
transform it into a coding-theoretic form
because the coding-theoretic form matches the theoretical discussion including the proofs of our main results.

Alice sends her ID $M \in {\cal M}:= \{1, \ldots, \sM\}$ via a noisy channel 
with a code $\phi$, which is a map from ${\cal M} $ to ${\cal X}$.
Bob outputs the $\sL$ messages $M_1, \ldots M_{\sL}$.
The decoder is given as the following $\Psi$.
For $y\in {\cal Y}$, we choose a subset $\Psi(y) \subset {\cal M}$ with $|\Psi(y)|= \sL$.

Then, we impose the following conditions
for an encoder $\phi$ and a decoder $\Psi$.
\begin{description}
\item[(A)] 
Verifiable condition. Any element $m \in {\cal M}$ satisfies
\begin{align}
    \Pr [m \notin \Psi(Y) | X = \phi(m)] \le \epsilon_A.
    \end{align}
\item[(B)]
Non-decodable condition.
Any single-element decoder $\psi: {\cal Y} \to {\cal M}$ satisfies
\begin{align}
   \frac{1}{\sM} \sum_m \Pr [ \psi(Y) = m | X = \phi(m)] \le \delta_B .
\end{align}
\item[(C)]
Non-cheating condition for honest Alice.
Any distinct pair $m'\neq m$ satisfies
\begin{align}
\Pr [m' \in \Psi(Y) | X=\phi(m)] \le \delta_C . 
\end{align}
\end{description}

Now, we discuss how the code $(\phi,\Psi)$ can be used for the task explained in Section \ref{S1}.
Assume that Alice sends her ID $M$ to Bob 
by using the encoder $\phi$ via noisy channel $W$
and Bob gets the list $M_1, \ldots, M_{\sL}$ by applying the decoder $\Psi$
at Step (i).
At Step (ii), Alice shows her ID $M$ to Bob.
Verifiable condition (A) guarantees that her ID $M$ belongs to Bob's list. 
Hence, the requirement (a) is satisfied.
Non-decodable condition (B) forbids Bob to identify Alice's ID at Step (i), hence
it guarantees the requirement (b).
In fact, if 
$m$ is Alice's ID and there exist another ID $m' (\neq m)$ and 
an element  $x_0 $ that might be different from $\phi(m)$
such that
$\Pr [m \in \Psi(Y) | X=x_0] $ and 
$\Pr [m' \in \Psi(Y) | X=x_0] $
are close to $1$,
Alice can make the following cheating by sending $x_0$ instead of $\phi(m)$. 
Since Alice knows that $m'$ belongs to Bob's decoded list,
she finds the third person whose ID is $m'$.
Then, she tells the third person this fact.
At Step (ii), the third person can make spoofing by showing Bob his/her ID. 
Since Non-cheating condition (C) forbids Alice such a cheating, 
it guarantees the requirement (c).
Further, Bob is allowed to decode less than $\sL$ messages.
That is, $\sL$ is the maximum number that Bob can list as the candidates of the original message.

However, Condition (C) is the security evaluation for honest Alice who uses the correct encoder $\phi$.
Dishonest Alice might send her message by using a different encoder.
To cover such a case, we impose the following condition instead of Condition (C).

\begin{description}
\item[(D)] Non-cheating condition for dishonest Alice.
If a pair of $x$ and $m$ satisfies
\begin{align}
   \Pr [m \notin \Psi(Y) | X=x] \le \frac{1}{2}, \hbox{ i.e., }
   \Pr [m \in \Psi(Y) | X=x] > \frac{1}{2},
\Label{HB2}
\end{align}
any $m'(\neq m)$ satisfies
\begin{align}
\Pr [m' \in \Psi(Y) | X=x] \le \delta_C . 
\Label{HB3}
\end{align}
\end{description}
In the following, 
when a code $(\phi,\Psi) $ satisfies conditions (A), (B) and (D), 
it is called an $(\epsilon_A,\delta_B,\delta_C)$ code.

Now, we observe how to characterize the code constructed to achieve the capacity in the paper \cite{Haya}.
For this characterization, we consider the following code when $\sM' \sL=\sM$.
We divide the $\sM$ messages into $\sM'$ groups whose group is composed of $\sL$ messages.
First, we prepare a code $({\phi}',{\psi}')$ to transmit the message with size $\sM'$ with decoding error probability $\epsilon_A'$,
where ${\phi}'$ is the encoder and ${\psi}'$ is the decoder.
When the message $M$ belongs to the $i$-th group,
Alice sends ${\phi}'(i)$.
Using the decoder ${\psi}'$, Bob recovers $i'$.
Then, Bob outputs $\sL$ elements that belongs to the $i'$-th group.
In this code, the parameter $\delta_B$ is given as $1/\sL$. Hence, it satisfies the non-decodable condition with a good parameter.
However, the parameter $\delta_C$ becomes at least $1-\epsilon_A'$.
Hence, this protocol essentially does not satisfy Non-cheating condition (C) nor (D).
In this way, our security parameter rules out the above trivial code construction.

\subsection{Relation to bit commitment}\Label{BC}
If our task is realized and ${\cal M}$ is a vector space $\bF_2^t$ over the finite field $\bF_2$, we can approximately realize 
bit commitment as follows while it is known that bit commitment can be realized by using noisy channel
\cite{BC1,BC2,BC3}.
To explain this realization, we state the formulation of approximate bit commitment as follows.
\begin{description}
\item[(1)] Committing phase: Alice has bit $X= 0$ or $1$. She makes a commitment to Bob.
\item[(2)] Revealing phase: Alice announces her bit to Bob. 
Bob checks whether it is true.
\end{description}
There are two kinds of cheating.
One is Bob's cheating. Bob guesses Alice's bit only from the result of the committing phase.
The other is Alice's cheating. 
In the revealing phase, Alice convinces Bob that Alice's bit is $X \oplus 1$.
Then, we impose three conditions.
\begin{description}
\item[(B1)] The mutual information between Alice's bit $X$ and Bob's result of the committing phase
is upper bounded by $\bar{\delta}_B$.
\item[(B2)] The successful maximum probability of Alice's cheating under the following condition
(B2-1) is upper bounded by $\bar{\delta}_A$.
\begin{description}
\item[(B2-1)]When Alice is honest in the revealing phase,
the probability that Alice convinces Bob that Alice's bit is $X$ is at least $1/2$.
\end{description}
 \item[(B3)] Another parameter is the error probability.
When Alice and Bob are honest, the probability that
Bob is convinced that Alice's bit is lower bounded by $X$ by $1-\bar{\epsilon}_A$.
\end{description}
If a protocol satisfies the above property, 
the protocol is is called a $(\bar{\epsilon}_A,\bar{\delta}_A,\bar{\delta}_B)$ approximate bit commitment.
When $\bar{\delta}_A=\bar{\delta}_B=0$, it is called perfect bit commitment.

Given a $(\epsilon_A,\delta_B,\delta_C)$ code $(\phi,\Psi) $, 
we construct a protocol for an approximate bit commitment as follows.
Assume that the message set of the code $(\phi,\Psi) $ is ${\cal M}=\bF_2^t$.
Then, $X$, $M$, and $Y$ are variables given in Section \ref{S2-1} with the code $(\phi,D) $.
Hence, $M$ is subject to the uniform distribution.
Define the conditional Renyi entropy of order 2 
by $H_2(M|Y):= -\log \sum_{y,m} P_Y(y) P_{M|Y}(m|y)^2 $.
We have $H_2(M|Y)\ge - \log \delta_B$ \cite[Theorem 1]{KRS}\cite[Lemma 5.9 and (5.134)]{Hay17}.
Due to the universal hashing lemma\cite{BBCM,HILL}
\cite[Theorem 2]{H-leaked}, there exists 
a surjective homomorphic\footnote{For a surjective homomorphic
unversal hash function, see \cite[Section II]{Tsuru}.} 
hash function $f$ from ${\cal M}$ to $\bF_2$ 
such that
\begin{align}
I(f(M);Y) \le 2 e^{-H_2(M|Y)} \le 2 \delta_B .
\Label{H20}
\end{align}
Using the above hash function $f$, we describe our protocol for approximate bit commitment as follows.

\begin{description}
\item[(1)] Commiting phase: Alice has bit $X= 0$ or $1$. 
She randomly choose one element $M$ from $f^{-1}(X)$.
She sends $M'$ by using the encoder $\phi$.
Bob output $\sL$ elements $M_1, \ldots, M_{\sL}$ from $Y$.

\item[(2)] Revealing phase: To convince Bob that her bit is $X$,
Alice announces $M'$ to Bob. 
Bob checks whether $f(M')=X$ and $M'$ is one element of 
$M_1, \ldots, M_{\sL}$.
\end{description}

\begin{thm}
The above protocol is 
an $(\epsilon_A,{\delta}_C, 2{\delta}_B)$ approximate bit commitment.
\hfill $\square$\end{thm}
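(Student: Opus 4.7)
The plan is to verify the three bit-commitment requirements (B3), (B1), (B2) in turn, matching them respectively to the underlying list-decoding conditions (A), (B), (D) of the code $(\phi,\Psi)$.

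The parameters $\bar{\epsilon}_A$ and $\bar{\delta}_B$ should follow essentially by substitution. For (B3), when both parties follow the protocol Alice sends $\phi(M)$ for some $M \in f^{-1}(X)$, and Bob accepts on revelation iff $M \in \Psi(Y)$, so the honest error probability equals $\Pr[M \notin \Psi(Y) \mid X = \phi(M)]$, which is at most $\epsilon_A$ by Verifiable condition (A). For (B1), I would first observe that $M$ is uniform on $\cM = \bF_2^t$ when $X$ is uniform on $\bF_2$: since $f$ is a surjective homomorphism the two cosets $f^{-1}(0)$ and $f^{-1}(1)$ have equal size $2^{t-1}$, so uniform choice of $X$ followed by uniform choice of $M \in f^{-1}(X)$ yields uniform $M$. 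Non-decodable condition (B) then gives $H_2(M|Y) \ge -\log \delta_B$, and substituting this into the universal-hashing inequality (\ref{H20}) produces $I(X;Y) = I(f(M);Y) \le 2\delta_B$.

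The substantive step is Alice's binding (B2), and it is where Condition (D) does the real work. A dishonest Alice commits by sending an arbitrary signal $x \in \cX$ and prepares, in advance, two reveal messages: $M_0 \in f^{-1}(X)$ for honestly revealing $X$ and $M_1 \in f^{-1}(X \oplus 1)$ for cheating. Since Alice has no access to $Y$, both messages are fixed before any channel output is observed, and hypothesis (B2-1) translates verbatim to $\Pr[M_0 \in \Psi(Y) \mid X = x] \ge 1/2$, which is precisely the trigger of (D) with $m = M_0$. Because $f(M_0) \ne f(M_1)$ forces $M_0 \ne M_1$, Condition (D) yields $\Pr[M_1 \in \Psi(Y) \mid X = x] \le \delta_C$; maximising over Alice's strategies gives $\bar{\delta}_A \le \delta_C$. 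The step demanding most care is precisely this reduction: I must argue that the cheating problem really does reduce pointwise to the scalar question ``is a single fixed message in the list?''. If Alice had any feedback of $Y$, she could choose $M_1$ as a function $\tilde M_1(Y)$ and (D) could no longer be applied pointwise; fortunately the one-way committing phase makes this concern vacuous in the present model. A minor boundary issue at probability exactly $1/2$ in (D) is handled by reading (HB2) as the non-strict inequality $\Pr[m \notin \Psi(Y) \mid X = x] \le 1/2$ that it literally states.
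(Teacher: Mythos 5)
Your proof is correct and follows essentially the same route as the paper: (B3) from Verifiable condition (A) together with the observation that $M'$ is uniform, (B1) from the universal-hashing bound \eqref{H20} applied with $H_2(M|Y) \ge -\log\delta_B$, and (B2) by instantiating Condition (D) with the commitment signal $x$ and the honest reveal message as the trigger. The added remarks — that the one-way committing phase forces $M_1$ to be fixed before $Y$ is seen, and that the boundary case at probability exactly $1/2$ is covered by the non-strict inequality in \eqref{HB2} — are welcome clarifications of points the paper's own proof leaves implicit, but they do not change the argument.
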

When the parameters $\epsilon_A,\delta_B$, and $\delta_C$ are sufficiently small, 
the above protocol works as bit commitment due to this theorem.

\begin{proof}
Since the hash function $f$ is surjective homomorphic,
$|f^{-1}(0)|=|f^{-1}(1)|=2^{t-1}$.
Hence, the random variable $M'$ chosen in the committing phase
is subject to the uniform distribution.
Hence, $M'$ can be considered as the variable $M$ because they are subject to the same distribution.
Since Bob's list contains $M'$ at least probability $1-\epsilon_A$, 
the condition (B3) holds.
Also, \eqref{H20} implies the condition (B1).

Assume that Bob outputs $\sL$ elements $M_1, \ldots, M_{\sL}$.
To satisfy the condition (B2-1), 
Alice needs to prepare the pair of $x$ and $m$ to satisfy the condition \eqref{HB2}.
That is, Alice needs to send an element $x$ to satisfy the condition \eqref{HB2} 
with keeping an element $m \in f^{-1}(X)$. 
To convince Bob that her bit is $X\oplus 1$,
in the revealing phase,
Alice sends Bob an element in $f^{-1}(X \oplus 1) $
that needs to belong to the set $\{M_1, \ldots, M_{\sL}\}$.
However, due to \eqref{HB3},
any element in $f^{-1}(X \oplus 1) $ satisfies this condition at most 
probability $\delta_C$.
Thus, we obtain the condition (B2).
\end{proof}

\subsection{Our setting with coding-theoretic description}\Label{S2-1B}
To rewrite the above conditions in a coding-theoretic way, 
we introduce several notations.
For $x\in {\cal X}$ and a distribution on ${\cal X}$, we define the distribution $W_x$ and $W_{P}$ on ${\cal Y}$
as $W_x(y):=W(y|x)$ and $W_P(y):= \sum_{x\in {\cal X}}P(x)W(y|x)$.
Alice sends her ID $M \in {\cal M}:= \{1, \ldots, \sM\}$ via noisy channel 
$W$.
Bob' decoder 
can be described as disjoint subsets 
$D=\{{\cal D}_{m_1, \ldots, m_{\sL}}\}_{\{m_1, \ldots, m_{\sL}\} \subset {\cal M}}$
such that $\cup_{\{m_1, \ldots, m_{\sL}\} \subset {\cal M}} {\cal D}_{m_1, \ldots, m_{\sL}}={\cal Y}$.
That is, 
we have the relation $
{\cal D}_{m_1, \ldots, m_{\sL}}=\{y| \{m_1, \ldots, m_{\sL}\}= \Psi(y)\}$.
In the following, we denote our decoder by $D$ instead of $\Psi$.

In particular, 
when a decoder 
has only one outcome as an element of ${\cal M}$ 
it is called a single-element decoder.
It is given as disjoint subsets 
$\tilde{{\cal D}}=\{ \tilde{{\cal D}}_{m}\}_{m \in {\cal M}}$
such that $\cup_{m \in {\cal M} } \tilde{{\cal D}}_{m}={\cal Y}$.

\begin{thm}
The conditions (A) -- (D) for an encoder $\phi$ and a decoder $D=\{{\cal D}_{m_1, \ldots, m_{\sL}}\}_{\{m_1, \ldots, m_{\sL}\} \subset {\cal M}}$
are rewritten in a coding-theoretic way as follows.
\begin{description}
\item[(A)] 
Verifiable condition.
\begin{align}
\epsilon_{A}(\phi,D)
&:=\max_{m \in {\cal M}}\epsilon_{A,m}(\phi(m),D) 
\le  \epsilon_A \\
\epsilon_{A,m}(x,D) 
&:=
1-\sum_{m_1, \ldots, m_{\sL}:\{m_1, \ldots, m_{\sL}\} \ni m }
W_{x}({\cal D}_{m_1, \ldots, m_{\sL}}).  
\end{align}

\item[(B)]
Non-decodable condition.
\begin{align}
\delta_{B}(\phi) 
&:=\max_{\tilde{D}=\{ \tilde{{\cal D}}_{m}\}_{m \in {\cal M}}}
\sum_{m=1}^{\sM}\frac{1}{\sM}
\delta_{B,\phi(m)}(\tilde{{\cal D}}_{m}) 
\le  \delta_B ,
\end{align}
where the above maximum is taken for 
a single-element decoder $\tilde{D}=\{ \tilde{{\cal D}}_{m}\}_{m \in {\cal M}}$,
and $\delta_{B,\phi(m)}(\tilde{{\cal D}}_{m}) $
is defined for a single-element decoder $\tilde{D}$ as
\begin{align}
\delta_{B,x}(\tilde{{\cal D}}_{m}) 
&:=W_{x} (\tilde{{\cal D}}_{m}  ) .
\end{align}
\item[(C)]
Non-cheating condition for honest Alice.
\begin{align}
\delta_{C}(\phi,D)
&:=
\max_{m \in {\cal M}}
\delta_{C,m}(\phi(m),D) \le  \delta_C 
\\
\delta_{C,m}(x,D)  
&:=\max_{m' (\neq m) \in {\cal M}}
\sum_{ m_1, \ldots, m_{\sL}:\{m_1, \ldots, m_{\sL}\} \ni m' }
W_{x}({\cal D}_{m_1, \ldots, m_{\sL}}) .
\Label{EE8}
\end{align}
\item[(D)] Non-cheating condition for dishonest Alice.
\begin{align}
\delta_{D}(D)&:=
\max_{m \in {\cal M}}
\delta_{D,m}(D)
\le \delta_C \\
\delta_{D,m}(D)&:=
\max_{x \in {\cal X}}
\Big\{\delta_{C,m}(x,{D}) \Big|
\epsilon_{A,m}(x,D)
\le  \frac{1}{2}
\Big\}.
\end{align}
\end{description}
\hfill $\square$\end{thm}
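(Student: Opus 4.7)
The statement is a direct translation between the two formulations, so the plan is to unpack the dictionary between the list-valued decoder $\Psi$ and the disjoint decomposition $D=\{{\cal D}_{m_1,\ldots,m_{\sL}}\}$, and then rewrite each probability in (A)--(D) as a sum of channel measures $W_x$. Concretely, I would first fix the bijective correspondence ${\cal D}_{\{m_1,\ldots,m_{\sL}\}}=\Psi^{-1}(\{m_1,\ldots,m_{\sL}\})$; since $\Psi$ assigns to every $y$ exactly one size-$\sL$ subset of ${\cal M}$, these sets are automatically disjoint and cover ${\cal Y}$. With this in hand, for each $x\in {\cal X}$ and each $m\in {\cal M}$,
\begin{align}
\Pr[m\in \Psi(Y)\mid X=x]=\sum_{\{m_1,\ldots,m_{\sL}\}\ni m} W_x({\cal D}_{m_1,\ldots,m_{\sL}}).
\end{align}

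Using this identity, condition (A) becomes $1-\sum_{\{m_1,\ldots,m_{\sL}\}\ni m}W_{\phi(m)}({\cal D}_{m_1,\ldots,m_{\sL}})\le \epsilon_A$ for every $m$, which is exactly $\epsilon_{A,m}(\phi(m),D)\le \epsilon_A$, and taking the maximum over $m$ gives the coding-theoretic (A). For (B), I would identify a single-element decoder $\psi:{\cal Y}\to {\cal M}$ with the partition $\tilde{D}=\{\tilde{\cal D}_m\}_{m}$ via $\tilde{\cal D}_m=\psi^{-1}(m)$; then $\Pr[\psi(Y)=m\mid X=\phi(m)]=W_{\phi(m)}(\tilde{\cal D}_m)=\delta_{B,\phi(m)}(\tilde{\cal D}_m)$, and averaging over $m$ and taking the supremum over all such $\tilde{D}$ recovers $\delta_B(\phi)\le \delta_B$. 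For (C), the above dictionary identity with $m'$ in place of $m$ on the right yields $\Pr[m'\in \Psi(Y)\mid X=\phi(m)]=\sum_{\{m_1,\ldots,m_{\sL}\}\ni m'}W_{\phi(m)}({\cal D}_{m_1,\ldots,m_{\sL}})$, so taking the maximum first over $m'\neq m$ and then over $m$ reproduces the definition of $\delta_C(\phi,D)$ exactly as in \eqref{EE8}.

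For (D), I would apply the same identity to an arbitrary $x\in {\cal X}$: the premise $\Pr[m\notin \Psi(Y)\mid X=x]\le 1/2$ is $\epsilon_{A,m}(x,D)\le 1/2$, and the conclusion $\Pr[m'\in \Psi(Y)\mid X=x]\le \delta_C$ for all $m'\neq m$ is $\delta_{C,m}(x,D)\le \delta_C$. Since (D) is a universally quantified implication over all pairs $(x,m)$, it is equivalent to requiring that for every $m$ the supremum of $\delta_{C,m}(x,D)$ over $x$ satisfying the constraint $\epsilon_{A,m}(x,D)\le 1/2$ is at most $\delta_C$; that supremum is $\delta_{D,m}(D)$, and taking the maximum over $m$ gives $\delta_D(D)\le \delta_C$.

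No step involves any non-trivial estimation or combinatorial argument; the work is purely notational. The only item that requires a brief justification is the claim that one may freely pass between the functional form ``for every single-element decoder $\psi$'' in (B) and the supremum over partitions $\tilde{D}$ of ${\cal Y}$ in the coding-theoretic statement, and similarly for (D) between the universal quantification over pairs $(x,m)$ and the nested maxima. This is the only point I would flag as the main (minor) obstacle, and it is handled simply by noting that both quantifications range over the same set of objects, so the suprema coincide.
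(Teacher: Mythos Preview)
Your proposal is correct and follows essentially the same approach as the paper: both establish the key identity $\Pr[m\in\Psi(Y)\mid X=x]=\sum_{\{m_1,\ldots,m_{\sL}\}\ni m}W_x({\cal D}_{m_1,\ldots,m_{\sL}})$ via the correspondence ${\cal D}_{\{m_1,\ldots,m_{\sL}\}}=\Psi^{-1}(\{m_1,\ldots,m_{\sL}\})$ (together with $\tilde{\cal D}_m=\psi^{-1}(m)$ for (B)) and then read off each condition. Your write-up is in fact more explicit than the paper's, which dispatches the whole argument in three sentences and does not spell out the quantifier/supremum passage you flag.
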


\begin{proof}
For any $m\in {\cal M}$ and $y \in {\cal Y}$,
the condition $m \in \Psi(y)$ is equivalent to the condition $y \in \cup _{ m_1, \ldots, m_{\sL}:\{m_1, \ldots, m_{\sL}\} \ni m'} {\cal D}_{m_1, \ldots, m_{\sL}}$.
Since $$
\sum_{ m_1, \ldots, m_{\sL}:\{m_1, \ldots, m_{\sL}\} \ni m }
W_{x}({\cal D}_{m_1, \ldots, m_{\sL}}) 
=W_{x}\Big(\bigcup _{ m_1, \ldots, m_{\sL}:\{m_1, \ldots, m_{\sL}\} \ni m} {\cal D}_{m_1, \ldots, m_{\sL}}
\Big),
$$ we obtain the equivalence between the conditions (A) and (C) given in Section \ref{S2-1} and those given here. 
As the condition $m=\psi(y)$ is equivalent to the condition $ y \in \tilde{{\cal D}}_m$,
we obtain the desired equivalence for the condition (B).
Also, the condition \eqref{HB2} is equivalent to the condition 
$\epsilon_{A,m}(x,D) \le  \frac{1}{2}$, which implies the desired equivalence with respect to the condition (D).
\end{proof}

In the following, 
when a code $(\phi,D) $ satisfies conditions (A), (B) and (D), 
it is called an $(\epsilon_A,\delta_B,\delta_C)$ code.
Also, for a code $(\phi,D) $,
we denote $\sM$ and $\sL$
by $|(\phi,D)|_1$ and $|(\phi,D)|_2$.
Also, we allow stochastic encoder, in which $\phi(m)$ is a distribution on ${\cal X}$.
In this case, for a function $f$ from ${\cal X}$ to $\mathbb{R}$,
$f(\phi(m))$ expresses $\sum_{x}f(x)\phi(m)(x)$.

\section{Information quantities}\Label{S51}
\subsection{Notation based on distribution and conditional distribution}\Label{S511}
Let ${\cal X}$ be a finite set and denote the set of probability distributions on ${\cal X}$ by ${\cal P}({\cal X})$.
Consider the channel written as the transition matrix 
$W$ from ${\cal X}$ to ${\cal Y}$.
For $x\in {\cal X}$ and a distribution $P \in {\cal P}({\cal X})$, we define the distribution $W_x$ and $W_{P}$ on ${\cal Y}$
as $W_x(y):=W(y|x)$ and $W_P(y):= \sum_{x\in {\cal X}}P(x)W(y|x)$.
We assume that $W_x\neq W_{x'}$ for $x\neq x' \in\cX$.
In the following,
$\mathbb{E}_x$ expresses the average 
with respect to a variable over the system ${\cal Y}$
under the distribution $W_x$
and
$\mathbb{V}_x$ expresses the variance with respect to a variable over the system ${\cal Y}$
under the distribution $W_x$.
This notation is also applied to the $n$-fold extended setting.

We define
\begin{align}
C(W)&:=\max_{P \in {\cal P}({\cal X})}I(P,W),\Label{CU} \\
I(P,W)&:=
\sum_{x\in {\cal X}}P(x) \sum_{y \in {\cal Y}} W_x(y)
(\log W_x(y)-\log W_P(y)) ,\\
H(P)& :=-\sum_{x\in {\cal X}}P(x) \log P(x), 
\end{align}
where the base of logarithm is $2$.

For $x,x' \in {\cal X}$, we define
\begin{align}
F(x,x'|P):=
\mathbb{E}_x (\log W_{x'}(Y)-\log W_P(Y))
=D(W_x\|W_P)-D(W_x\|W_{x'}).
\end{align}
Then, we define
\begin{align}
\zeta_2(P) &:= \max_{x \neq x'\in {\cal X}} \max_{x''\in {\cal X}} F(x,x''|P)-F(x',x''|P)  \Label{zetM}\\
\zeta_1(P) &:= \min_{x \neq x'\in {\cal X}} F(x,x|P)-F(x',x|P) 
=\min_{x \neq x'}
D(W_{x'}\|W_{x})+D(W_x\|W_P) -D(W_{x'}\|W_P)
. \Label{zetm}
\end{align}
In this paper, the condition
\begin{align}
\zeta_1(P)>0\Label{IMC}
\end{align}
plays an important role.
\begin{lem}\Label{LX1}
When $P \in {\cal P}({\cal X})$ satisfies the condition $I(P,W)=C(W)$, the condition \eqref{IMC} holds.
\end{lem}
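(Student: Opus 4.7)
The plan is to use the Karush--Kuhn--Tucker (KKT) optimality conditions for the capacity problem. Recall that $P \in \cP(\cX)$ attains $C(W) = \max_{Q \in \cP(\cX)} I(Q, W)$ if and only if $D(W_x \| W_P) = C(W)$ for every $x \in \supp(P)$ and $D(W_x \| W_P) \le C(W)$ for every $x \in \cX$. This is the standard Gallager/Arimoto-type necessary and sufficient condition, obtained by a Lagrangian analysis of the concave program over the simplex; I would cite it as a known fact rather than re-deriving it.

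The identity
\begin{align*}
F(x,x|P) - F(x',x|P) = D(W_{x'}\|W_x) + \bigl[D(W_x\|W_P) - D(W_{x'}\|W_P)\bigr]
\end{align*}
already recorded in the definition of $\zeta_1(P)$ is then split according to the position of $x,x'$ relative to $\supp(P)$. If both $x,x' \in \supp(P)$, the KKT equality makes the bracket vanish, so the expression reduces to $D(W_{x'}\|W_x)$, which is strictly positive by the standing assumption $W_x \neq W_{x'}$ and Gibbs' inequality. If $x \in \supp(P)$ and $x' \notin \supp(P)$, KKT gives $D(W_x\|W_P) - D(W_{x'}\|W_P) = C(W) - D(W_{x'}\|W_P) \ge 0$, so the expression remains bounded below by $D(W_{x'}\|W_x) > 0$.

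The main obstacle will be the case $x \notin \supp(P)$, because then KKT only yields $D(W_x\|W_P) \le C(W)$ and the bracket can a priori be negative, so one must show that $D(W_{x'}\|W_x)$ dominates the deficit. My approach is either to restrict $\cX$ to the ``active'' inputs, namely those belonging to some capacity-achieving support, reducing everything to the previous two cases; or to argue via the concavity of $Q \mapsto I(Q,W)$: perturbing $P$ toward $\delta_x$ must weakly decrease $I$, and the resulting first/second-order optimality deficit controls $C(W) - D(W_x\|W_P)$ quantitatively in terms of the divergences between $W_x$ and the $W_{x''}$, $x'' \in \supp(P)$, which can then be combined with the triangle-like term $D(W_{x'}\|W_x)$ using Gibbs' inequality applied to $W_x$ vs.\ $W_{x'}$. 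Turning this perturbation heuristic into the sharp inequality is the technical core, and I would expect the final bound to invoke the hypothesis $W_x \neq W_{x'}$ crucially in order to promote ``$\ge 0$'' to the strict ``$>0$'' required by \eqref{IMC}.
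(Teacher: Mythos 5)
Your analysis is considerably more careful than the paper's own one-line proof, which asserts ``Since $D(W_x\|W_P) = D(W_{x'}\|W_P)$'' with no restriction on $x,x'$. By the KKT conditions you cite, that equality holds only when \emph{both} $x$ and $x'$ lie in $\supp(P)$, whereas the minimization in \eqref{zetm} runs over all pairs in $\cX$. Your first two cases (both in $\supp(P)$; $x \in \supp(P)$, $x' \notin \supp(P)$) are correct and, notably, they cover everything the paper actually needs downstream: in the proof of Lemma~\ref{LL12}, step~$(b)$ applies the bound $F(x,x|P) - F(x',x|P) \ge \zeta_1(P)$ only with $x$ a codeword symbol drawn from $P$, hence $x \in \supp(P)$.

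The third case $x \notin \supp(P)$, which you correctly identify as the technical core, cannot be closed, because Lemma~\ref{LX1} as stated is false there. Take $\cX = \{0,1,2\}$, $\cY=\{0,1\}$, $W_0=(1-\epsilon,\epsilon)$, $W_1=(\epsilon,1-\epsilon)$, $W_2=(\tfrac12+\delta,\tfrac12-\delta)$ with $0<\delta,\epsilon\ll 1$. Then $P=(\tfrac12,\tfrac12,0)$ achieves capacity with $W_P=(\tfrac12,\tfrac12)$ and $D(W_2\|W_P)=O(\delta^2)<C(W)$, but for $(x,x')=(2,0)$,
\begin{align*}
D(W_0\|W_2) + D(W_2\|W_P) - D(W_0\|W_P)
= -\frac{2\delta(1-2\epsilon)}{\ln 2} + O(\delta^2) < 0 ,
\end{align*}
so $\zeta_1(P)<0$. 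Consequently, neither of your proposed fallbacks can succeed: restricting $\cX$ to a capacity-achieving support proves a claim about a quantity different from $\zeta_1(P)$ as defined in \eqref{zetm}, and no perturbation estimate can establish a false inequality. The statement (and the paper's proof) becomes correct if the outer minimum in \eqref{zetm} is restricted to $x\in\supp(P)$, which is the only regime later invoked; with that restatement, your two completed cases already constitute a full and rigorous proof, and in fact repair the imprecision in the paper's argument.
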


\begin{proof}
Since $D(W_x\|W_P) =D(W_{x'}\|W_P)$,
we have
\begin{align}
\zeta_1(P_0) =\min_{x \neq x'}
D(W_{x'}\|W_{x})>0.
\end{align}
\end{proof}

Also, we often impose the following condition for our channel $W$;
\begin{align}
V(W):=\max_{x,x'\in {\cal X}}\mathbb{V}_x (\log W_{x'}(Y)-\log W_P(Y)) < \infty.\Label{OfIm}
\end{align}

\subsection{Notation based on variables}
When we focus on a Markov chain $U-X-Y$ with a variable on a finite set ${\cal U}$,
it is intuitive to handle notations based on the variables $U,X, $ and $Y$.
In this paper, the conditional distribution on $Y$ conditioned with $X$ is fixed to the channel $W$.
It is sufficient to fix a joint distribution $P \in {\cal P}({\cal U}\times {\cal X})$.
To clarify this dependence, we add the subscript $~_P$ as $H(X)_P,I(X;Y)_P,$ and $H(X|U)_P$, etc.
In fact, the notation given in Section \ref{S511} will be used for our proof of the direct part,
the notation given in this section will be used for the characterization of our rate region and 
our proof of the converse part.
This is because the characterization of our rate region and 
the converse part mainly discuss Markovian chains
while the direct part mainly evaluates  the parameters $\epsilon_A,\delta_B,$ and $\delta_D$.

Then, we define 
\begin{align}
H_0 &:=\max_{P \in {\cal P}({\cal X})} \{H(X)_P| I(X;Y)_P=C(W)\}, \\
P_{\max} &:=\argmax_{P \in {\cal P}({\cal X})} \{H(X)_P| I(X;Y)_P=C(W)\},
\end{align}
and the function $\kappa$ for $R_1 \ge H_0$ as
\begin{align}
\kappa(R_1):= 
\left \{
\begin{array}{ll}
\max_{P\in {\cal P}({\cal U}\times {\cal X})} \{ H(X|Y,U)_P |  H(X|U)_P= R_1 \}  & \hbox{ when } R_1 > H_0\\
\protect{[} R_1 - C(W) ]_+ & \hbox{ when } R_1 \le H_0.
\end{array}
\right .
\end{align}
In the above definition, the size of the finite set ${\cal U}$ can be chosen to be arbitrarily large.
When a maximization $\max$ or a union $\cup$ with respect to $P\in {\cal P}({\cal U}\times {\cal X})$ appears
 in the remaining part,
this rule for the finite set ${\cal U}$ is applied.
Then, we define the set ${\cal P}_0$ as
\begin{align}
{\cal P}_0:=\{P\in {\cal P}({\cal X})| \kappa( H(X)_P)=H(X|Y)_P\}.
\end{align}
For example, the distribution $P_{\max}$ and the uniform distribution $P_{\uni}$ on ${\cal X}$ belong to ${\cal P}_0$.
\begin{lem}
We have
\begin{align}
\{P\in {\cal P}_0 | H(X)_P \le H_0\}
=\{P\in {\cal P}({\cal X})| I(X;Y)_P=C(W)\}.
\end{align}
\end{lem}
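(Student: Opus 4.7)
The plan is to prove the two set inclusions separately by unfolding the definitions of $\cP_0$ and $\kappa$.

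For the $\supseteq$ direction, suppose $P \in \cP(\cX)$ satisfies $I(X;Y)_P = C(W)$. The bound $H(X)_P \le H_0$ is immediate from the definition of $H_0$ as a maximum over capacity-achieving distributions. Moreover, $H(X)_P \ge I(X;Y)_P = C(W)$, so $[H(X)_P - C(W)]_+ = H(X)_P - C(W)$; and rewriting $I(X;Y)_P = H(X)_P - H(X|Y)_P = C(W)$ gives $H(X|Y)_P = H(X)_P - C(W)$. Combining, $\kappa(H(X)_P) = H(X|Y)_P$ via the $R_1 \le H_0$ branch of $\kappa$, so $P \in \cP_0$.

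For the $\subseteq$ direction, suppose $P \in \cP_0$ with $H(X)_P \le H_0$. The defining equation of $\cP_0$ together with the second branch of $\kappa$ gives $H(X|Y)_P = [H(X)_P - C(W)]_+$. Provided $H(X)_P \ge C(W)$, the positive part strips to $H(X)_P - C(W)$, and then $I(X;Y)_P = H(X)_P - H(X|Y)_P = C(W)$, which is the required membership in the right-hand set.

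The main obstacle I anticipate is the boundary case $H(X)_P < C(W)$ in the $\subseteq$ direction: there the positive part vanishes, and a distribution concentrated at a single symbol would satisfy $H(X|Y)_P = 0 = [H(X)_P - C(W)]_+$ trivially while violating $I(X;Y)_P = C(W)$. Resolving this cleanly requires invoking the paper's convention that $\cP_0$ is taken within the effective domain of $\kappa$ (the announced range $R_1 \ge H_0$, and by extension $R_1 \ge C(W)$ where the positive part does not collapse), so that $\kappa$ acts as plain subtraction and the case analysis closes.
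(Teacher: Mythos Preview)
Your argument mirrors the paper's proof almost exactly: both directions are handled by unfolding the $R_1\le H_0$ branch of $\kappa$ and the identity $H(X|Y)_P=H(X)_P-I(X;Y)_P$. The paper writes the $\subseteq$ direction as the chain
\[
H(X|Y)_P=[H(X)_P-I(X;Y)_P]_+\ge [H(X)_P-C(W)]_+=\kappa(H(X)_P),
\]
and then uses $P\in\cP_0$ to force equality and conclude $I(X;Y)_P=C(W)$; your version is the same computation read in the other order.

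You are right to flag the case $H(X)_P<C(W)$: a point mass has $H(X)_P=H(X|Y)_P=0=[0-C(W)]_+$ yet $I(X;Y)_P=0\neq C(W)$, so without an additional convention the inclusion $\subseteq$ fails there. The paper's own proof glosses over exactly this case (it simply asserts ``the combination of the above conditions implies $I(X;Y)_P=C(W)$''), so your concern is not a defect relative to the paper. Your proposed fix, however, does not quite work as stated: if one literally restricts $\kappa$ (and hence $\cP_0$) to $R_1\ge H_0$, the left-hand set collapses to $\{P\in\cP_0:H(X)_P=H_0\}$, which is not what the lemma intends. The cleaner reading is that the lemma is meant under the implicit restriction $H(X)_P\ge C(W)$ (equivalently, that degenerate distributions with $I(X;Y)_P=H(X)_P<C(W)$ are excluded from $\cP_0$), under which both your argument and the paper's go through verbatim.
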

\begin{proof}
When $H(X)_P \le H_0$, 
we have $H(X|Y)_P =H(X)_P-I(X;Y)_P= [H(X)_P-I(X;Y)_P]_+ \ge [H(X)_P-C(W)]_+=\kappa (H(X)_P)$.
Also, the condition $P\in {\cal P}_0$ implies the condition $[H(X)_P-C(W)]_+=H(X|Y)_P$.
The combination of the above conditions implies the condition $I(X;Y)_P=C(W)$.

Conversely, 
the condition $I(X;Y)_P=C(W)$ implies the conditions $H(X)_P \le H_0$.
and $H(X|Y)_P = [H(X)_P-C(W)]_+=\kappa (H(X)_P)$.
Hence, the desired relation is obtained.
\end{proof}

Then, we prepare the following lemmas whose proofs are given in Appendices \ref{A01} and \ref{A02}.
\begin{lem}\Label{LA1}
Given a joint distribution $P\in {\cal P}({\cal U}\times {\cal X})$, we have the Markov chain $U-X-Y$, and focus on  the information quantities
$ I(X;Y|U)_P$ and $H(X|U)_P$.
Then, we have
\begin{align}
{\cal C}
:=&
\cup_{P\in {\cal P}({\cal U}\times {\cal X})} \{ (R_1,R_2) | 0 < R_1-R_2< I(X;Y|U)_P , ~R_1 < H(X|U)_P ,~0< R_1,~0< R_2\}  \nonumber \\
=&
\{ (R_1,R_2) | 0 < R_1< \log |{\cal X}|, ~\kappa(R_1) < R_2 < R_1 \}  .
 \end{align}
\hfill $\square$
\end{lem}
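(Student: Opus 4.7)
The plan is to prove the claimed equality by establishing both set inclusions, with the core tool being the convexity of the achievable region $\{(H(X|U)_P, I(X;Y|U)_P) : P \in {\cal P}({\cal U}\times{\cal X})\}$ in the plane (which follows from merging two distributions via an auxiliary index appended to $U$). The case analysis $R_1 \le H_0$ versus $R_1 > H_0$ mirrors the two-piece definition of $\kappa$.

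For ${\cal C} \subseteq \{(R_1,R_2) \mid 0 < R_1 < \log|{\cal X}|,\, \kappa(R_1) < R_2 < R_1\}$, I start from a joint distribution $P$ witnessing $(R_1,R_2) \in {\cal C}$, so that $R_1 < H(X|U)_P$ and $R_1 - R_2 < I(X;Y|U)_P$. The bounds $0 < R_2 < R_1 < H(X|U)_P \le \log|{\cal X}|$ are immediate. To establish $\kappa(R_1) < R_2$ when $R_1 \le H_0$, I invoke $I(X;Y|U)_P \le C(W)$ to get $R_2 > R_1 - C(W)$, which together with $R_2 > 0$ yields $R_2 > [R_1 - C(W)]_+ = \kappa(R_1)$. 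For $R_1 > H_0$, I introduce the function $g(a) := \max\{I(X;Y|U)_{P'} \mid H(X|U)_{P'} = a\}$, which is concave on $[0,\log|{\cal X}|]$, satisfies $g(H_0)=C(W)$, and is non-increasing on $[H_0,\log|{\cal X}|]$ (since $g\le C(W)$ everywhere and peaks at $H_0$); then $g(R_1) \ge g(H(X|U)_P) \ge I(X;Y|U)_P > R_1 - R_2$, producing a $P^*$ with $H(X|U)_{P^*}=R_1$ and $H(X|Y,U)_{P^*} = R_1 - g(R_1) < R_2$, hence $\kappa(R_1) \le H(X|Y,U)_{P^*} < R_2$.

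For the reverse inclusion, given $(R_1,R_2)$ with $\kappa(R_1) < R_2 < R_1 < \log|{\cal X}|$, I exhibit a witness $P$. When $R_1 \le H_0$, I take $U$ trivial and $P_X = P_{\max}$, giving $I(X;Y|U)_P = C(W) > R_1 - R_2$ and $H(X|U)_P = H_0 \ge R_1$; at the boundary $R_1 = H_0$, I refine by a small mix with the uniform distribution through an added auxiliary variable, strictly increasing $H(X|U)$ past $R_1$ while keeping $I(X;Y|U)$ strictly above $R_1 - R_2$. When $R_1 > H_0$, I take $P^*$ attaining the required value of $\kappa(R_1)$, so that $H(X|U)_{P^*} = R_1$ and $I(X;Y|U)_{P^*} = R_1 - \kappa(R_1) > R_1 - R_2$; the same small mix with the uniform (allowed by $R_1 < \log|{\cal X}|$) nudges $H(X|U)$ strictly above $R_1$ with $I(X;Y|U)$ staying strictly above $R_1 - R_2$ by continuity.

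The hardest step is the case $R_1 > H_0$ in the $\subseteq$ direction, which requires transferring from a witness $P$ having only $H(X|U)_P > R_1$ to a distribution $P^*$ with $H(X|U)_{P^*} = R_1$ while keeping $H(X|Y,U)_{P^*}$ strictly below $R_2$. I expect the cleanest route is through the concavity and monotonicity of $g$ on $[H_0,\log|{\cal X}|]$, together with compactness of ${\cal P}({\cal U}\times{\cal X})$ so the relevant extremum is attained; the justification for these regularity properties is that $H(X|U)_P$ and $I(X;Y|U)_P$ are continuous concave functions of $P$ on the compact simplex and inherit the convex-combination structure through the auxiliary variable $U$.
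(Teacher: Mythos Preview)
Your proposal is correct and follows essentially the same approach as the paper's proof: both hinge on the convexity of the set $\{(H(X|U)_P,\,I(X;Y|U)_P)\}_{P}$ (obtained by appending an auxiliary index to $U$), and on the resulting monotonicity of the upper-boundary function $g(a)=\max\{I(X;Y|U)_P:\,H(X|U)_P=a\}$ on $[H_0,\log|{\cal X}|]$, together with the case split $R_1\le H_0$ versus $R_1>H_0$. The paper compresses both inclusions into a single chain of set equalities and identifies the lower bound on $R_2$ directly with $\kappa(R_1)$, whereas you prove the two inclusions separately and handle the strict inequality $R_1<H(X|U)_P$ in the reverse direction by an explicit perturbation towards $P_{\uni}$; the substance is the same.
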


\begin{lem}\Label{LA2}
We have
\begin{align}
&{\cal C}\cap \{(R_1,R_2)| H_0 \le R_1 \le \log |{\cal X}| \}\nonumber \\
=&\mathcal{CO}(
\cup_{P \in {\cal P}_0}
\{(R_1,R_2)| 
H_0 < R_1 < H(X)_P ,~
R_1- I(X;Y)_P< R_2< R_1\}),
\end{align}
where $\mathcal{CO}$ expresses the convex full.
\hfill $\square$
\end{lem}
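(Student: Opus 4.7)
The plan is to prove the two inclusions separately, relying on the description of $\mathcal{C}$ from Lemma~\ref{LA1} together with the convexity of $\mathcal{C}$, which follows from a time-sharing argument: given two joint distributions on $\mathcal{U}\times\mathcal{X}$, prepending an auxiliary binary index to $U$ realizes any convex combination of the pairs $(H(X|U),I(X;Y|U))$, hence of any two points of $\mathcal{C}$.

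For the inclusion $\supseteq$, each rectangle indexed by $P\in\mathcal{P}_0$ in the right-hand side is obtained by taking the auxiliary variable $U$ trivial in the union defining $\mathcal{C}$, so it lies in $\mathcal{C}$; its $R_1$-projection lies in $(H_0,H(X)_P)\subset(H_0,\log|\mathcal{X}|]$. Since the left-hand side is the intersection of two convex sets, it is convex and therefore contains the entire convex hull of these rectangles.

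For the opposite inclusion $\subseteq$, pick $(R_1,R_2)$ in the left-hand side with $R_1>H_0$. The defining formula for $\kappa$ supplies $P^{\ast}\in\mathcal{P}(\mathcal{U}\times\mathcal{X})$ with $H(X|U)_{P^{\ast}}=R_1$ and $H(X|Y,U)_{P^{\ast}}=\kappa(R_1)$; write $P^{\ast}$ in terms of its $U$-marginal $P_U$ and conditionals $P_u\in\mathcal{P}(\mathcal{X})$, so
\[
R_1=\sum_u P_U(u)H(X)_{P_u},\qquad \kappa(R_1)=\sum_u P_U(u)H(X|Y)_{P_u}.
\]
Combining the pointwise bound $H(X|Y)_{P_u}\le\kappa(H(X)_{P_u})$ (obtained by taking $U$ trivial in the definition of $\kappa$) with the concavity of $\kappa$ on $[H_0,\log|\mathcal{X}|]$ (another time-sharing argument applied to two optimizers), the chain
\[
\sum_u P_U(u)H(X|Y)_{P_u}\le \sum_u P_U(u)\kappa(H(X)_{P_u})\le \kappa\Big(\sum_u P_U(u)H(X)_{P_u}\Big)=\kappa(R_1)
\]
is tight throughout, forcing $P_u\in\mathcal{P}_0$ for every $u$ with positive mass. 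This exhibits $(R_1,\kappa(R_1))$ as a convex combination of the lower-right corner points of the rectangles attached to $\mathcal{P}_0$; using the slack $R_2-\kappa(R_1)>0$ and $R_1-R_2>0$, I lift each constituent by $R_2-\kappa(R_1)$ in the $R_2$-coordinate and shift its $R_1$-coordinate slightly to the left of $H(X)_{P_u}$, landing strictly inside the corresponding open rectangle while preserving the barycentric identities so as to reproduce $(R_1,R_2)$. The boundary case $R_1=H_0$ follows by a limit argument.

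The main technical obstacle I anticipate is making the concavity chain legitimate when the decomposition of $P^{\ast}$ yields conditionals with $H(X)_{P_u}\le H_0$, where the piecewise branch of $\kappa$ kicks in and the bound $H(X|Y)_{P_u}\le\kappa(H(X)_{P_u})$ can fail. I would address this up front by showing that the optimizer for $\kappa(R_1)$ may always be chosen with all $H(X)_{P_u}\ge H_0$, via further time-sharing any offending $P_u$ against $P_{\max}\in\mathcal{P}_0$ while readjusting $P_U$ to preserve $H(X|U)_{P^{\ast}}=R_1$; the remaining steps are routine convex-analysis bookkeeping.
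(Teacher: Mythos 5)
Your overall structure mirrors the paper's appendix proof: take an optimizer $P^\ast$ for $\kappa(R_1)$, decompose by $U$-conditionals, force each $P_u\in\mathcal{P}_0$ by a tightness argument, clean up conditionals below $H_0$ by time-sharing against $P_{\max}$, and lift into the open trapezoids. The paper carries these out as its statements (S1)--(S3); its (S2) proves ``$P_u\in\mathcal{P}_0$'' by contradiction rather than by your explicit Jensen sandwich, but the substance is the same, and your $\supseteq$ direction via convexity of $\mathcal{C}$ (which the paper leaves implicit) is sound.

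There is, however, a genuine error in your sandwich chain: both inequalities point the wrong way. The displayed definition of $\kappa$ should read $\min$ rather than $\max$ --- the paper's own proof of Lemma~\ref{LA1} arrives at $\kappa(R_1)=\min_P\{H(X|YU)_P\,|\,H(X|U)_P=R_1\}$, and this is forced: $\mathcal{C}=\{(R_1,R_2):\kappa(R_1)<R_2<R_1\}$ requires $\kappa$ to be the \emph{lower} boundary of the achievable region, which is widest when $I(X;Y|U)_P$ is maximized, i.e.\ $H(X|YU)_P$ is minimized. Consequently $\kappa$ is convex (not concave), being linear minus the concave function $R_1\mapsto\max\{I(X;Y|U)_P\,|\,H(X|U)_P\ge R_1\}$, and ``taking $U$ trivial'' gives the one-sided bound $\kappa(H(X)_{P_u})\le H(X|Y)_{P_u}$, the opposite of what you assert. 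The corrected chain is
\[
\kappa(R_1)=\sum_u P_U(u)H(X|Y)_{P_u}\;\ge\;\sum_u P_U(u)\,\kappa\bigl(H(X)_{P_u}\bigr)\;\ge\;\kappa\Bigl(\sum_u P_U(u)H(X)_{P_u}\Bigr)=\kappa(R_1),
\]
and tightness yields the same conclusion; but as written you can justify neither the pointwise bound from the definition of $\kappa$ nor the Jensen step from the convexity of the feasible set, so the argument does not go through. Relatedly, your stated reason for excluding conditionals with $H(X)_{P_u}\le H_0$ (that the pointwise bound ``can fail'') is an artifact of the sign flip --- with the correct direction it holds unconditionally, since $H(X|Y)_{P_u}\ge[H(X)_{P_u}-C(W)]_+$. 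The real obstruction is that the trapezoid $\{H_0<R_1<H(X)_{P_u},\dots\}$ is empty when $H(X)_{P_u}\le H_0$ and so contributes nothing to the convex hull; the $P_{\max}$-replacement you sketch must then be shown to keep $H(X|U)$ fixed while not increasing $H(X|YU)$, which is precisely the non-trivial inequality chain the paper supplies in its step (S1).
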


\section{Main results}\Label{S51}
To give the capacity region, 
we consider $n$-fold discrete memoryless extension $W^n$ of the channel $W$.
A sequence of codes $\{(\phi_n,D_n)\}$ is called strongly secure 
when 
$\epsilon_A(\phi_n,D_n) \to 0$,
$\delta_B(\phi_n) \to 0$,
$\delta_D(D_n) \to 0$.
A sequence of codes $\{(\phi_n,D_n)\}$ is called weakly secure 
when 
$\epsilon_A(\phi_n,D_n) \to 0$,
$\delta_B(\phi_n) \to 0$,
$\delta_C(\phi_n,D_n) \to 0$.
A rate pair $(R_1,R_2)$ is strongly deterministically (stochastically) achievable when
there exists a strongly secure sequence of deterministic (stochastic) codes $\{(\phi_n,D_n)\}$
such that 
$\frac{1}{n}\log |(\phi_n,D_n)|_1\to R_1$ and 
$\frac{1}{n}\log |(\phi_n,D_n)|_2\to R_2$\footnote{The definitions of 
$|(\phi_n,D_n)|_1$ and $|(\phi_n,D_n)|_2$ are given in the end of Section II.}.
A rate pair $(R_1,R_2)$ is weakly deterministically (stochastically) achievable when
there exists a weakly secure sequence of deterministic (stochastic) codes $\{(\phi_n,D_n)\}$
such that 
$\frac{1}{n}\log |(\phi_n,D_n)|_1\to R_1$ and 
$\frac{1}{n}\log |(\phi_n,D_n)|_2\to R_2$.
Then, we denote the set of strongly deterministically (stochastically) achievable rate pair $(R_1,R_2)$ by ${\cal R}_{s,d}$ (${\cal R}_{s,s}$).
In the same way, we denote the set of weakly deterministically (stochastically) achievable rate pair $(R_1,R_2)$ by ${\cal R}_{w,d}$ (${\cal R}_{w,s}$).

\begin{thm}\Label{Converse}
We have the following characterization.
\begin{align}
{\cal R}_{w,d}  \subset \overline{{\cal C}}
,\quad 
{\cal R}_{s,s} \subset \overline{{\cal C}} .\Label{Con2} 
\end{align}
\hfill $\square$\end{thm}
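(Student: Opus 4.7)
The plan is to show, from any code sequence $(\phi_n,D_n)$ achieving $(R_1,R_2)$ under either security notion, that single-letterizing with a uniform time index $T$ produces a joint distribution $P_n$ on $\{1,\ldots,n\}\times\cX$ witnessing $(R_1,R_2)\in\overline{\cal C}$ via Lemma \ref{LA1}. Concretely, setting $U=T$, $X=X_T$, $Y=Y_T$, I aim at $H(X|U)_{P_n}\ge R_1-o(1)$ and $R_1-I(X;Y|U)_{P_n}\le R_2+o(1)$, so that $(R_1-\epsilon,R_2+\epsilon)\in{\cal C}$ at sufficiently large $n$ for every $\epsilon>0$, giving $(R_1,R_2)\in\overline{\cal C}$. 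The trivial bounds $R_2\le R_1$ and $R_1\le\log|\cX|$ will come automatically from $\sL<\sM$ and $H(X|U)\le\log|\cX|$.

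The two controlling entropies arise from conditions (A) and (C)/(D). Applying list-Fano to the uniform $M$ and the $\sL$-list $\Psi_n(Y^n)$ yields $\tfrac{1}{n}H(M|Y^n)\le R_2+o(1)$. I also need $\tfrac{1}{n}H(M|X^n)\to 0$ (an ``essential injectivity'' of the encoder), and this is where the two cases of the theorem split. In the weak-deterministic case, a collision $\phi_n(m)=\phi_n(m')$ with $m\ne m'$ would force $1-\epsilon_A\le\Pr[m'\in\Psi_n(Y^n)|X^n=\phi_n(m')]=\Pr[m'\in\Psi_n(Y^n)|X^n=\phi_n(m)]\le\delta_C$ by (A) and (C), which is impossible once $\epsilon_A+\delta_C<1$; thus $\phi_n$ is injective and $H(M|X^n)=0$. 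In the strong-stochastic case, for each $x\in\cX^n$ I would set $A(x):=\{m:\Pr[m\in\Psi_n(Y^n)|X^n=x]>1/2\}$: condition (D) applied to any $m\in A(x)$ forbids every $m'\ne m$ from also lying in $A(x)$, so $|A(x)|\le 1$; Markov's inequality applied to the $[0,1]$-valued function $x\mapsto\Pr[m\in\Psi_n(Y^n)|X^n=x]$, combined with (A), gives $\phi_n(m)(\{x:m\in A(x)\})\ge 1-2\epsilon_A$; since these sets are disjoint across $m$, the rule ``output the unique $m$ with $X^n$ in its set'' decodes $M$ from $X^n$ with error at most $2\epsilon_A$, and Fano yields $\tfrac{1}{n}H(M|X^n)=o(1)$.

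The final step is standard single-letterization. Taking $T$ uniform on $\{1,\ldots,n\}$ independent of $(M,X^n,Y^n)$ makes $U-X-Y$ a Markov chain by memorylessness of $W$; subadditivity of entropy gives $H(X|U)\ge\tfrac{1}{n}H(X^n)\ge R_1-\tfrac{1}{n}H(M|X^n)=R_1-o(1)$, while the memoryless bound $I(X^n;Y^n)\le\sum_i I(X_i;Y_i)$ together with data-processing $M\to X^n\to Y^n$ yields $I(X;Y|U)\ge\tfrac{1}{n}I(M;Y^n)\ge R_1-R_2-o(1)$. These are precisely the witness inequalities promised in the first paragraph, and passing to the closure concludes.

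I expect the hard step to be the essential-injectivity argument in the strong-stochastic case: converting the probabilistic form of condition (D) into the combinatorial disjointness of the supports $\{x:m\in A(x)\}$ that makes $M$ recoverable from $X^n$ is what distinguishes this converse from the classical list-decoding converse, and is where condition (D) is used in a way that the weaker condition (C) cannot supply in the stochastic setting. Everything else is a routine list-Fano plus single-letterization calculation, and condition (B) is not needed for this inclusion.
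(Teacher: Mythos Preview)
Your proposal is correct and reaches the same conclusion, but the route differs from the paper's in two places, and the comparison is worth recording.

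For the bound $R_1\le H(X|U)_{P_n}+o(1)$, the paper does \emph{not} argue injectivity directly. Instead it runs a thought experiment: send the same $X^n$ to Bob $K$ times, form $K$ lists, and declare $\hat M$ to be the most frequent element; since the true message appears in each list with probability $\ge 1-\epsilon_{A,M}$ while any wrong message appears with probability $\le \delta_C$, letting $K\to\infty$ recovers $M$ from $X^n$ whenever $\epsilon_{A,M}<1-\delta_C$, and Fano plus $I(M;\hat M)\le H(X^n)$ finishes. Your collision argument in the deterministic case (a collision forces $1-\epsilon_A\le\delta_C$) is a cleaner way to the same inequality. For ${\cal R}_{s,s}$ the paper derandomizes: it picks, for each $m$, an $x_m$ in the support of $\phi_n(m)$ with $\epsilon_{A,m}(x_m,D_n)\le\epsilon_A(\phi_n,D_n)$, notes that eventually $\epsilon_{A,m}(x_m,D_n)\le 1/2$ so condition~(D) yields $\delta_{C,m}(x_m,D_n)\le\delta_D(D_n)$, and thereby reduces to the deterministic case. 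Your alternative---using (D) to force $|A(x)|\le 1$ and then Markov to get disjoint high-probability sets $B_m$---is a direct proof that $\tfrac1n H(M|X^n)\to 0$ without passing through a deterministic surrogate; it makes the role of the threshold $1/2$ in (D) very transparent.

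For the bound $R_1-R_2\le I(X;Y|U)_{P_n}+o(1)$, the paper invokes the list-decoding meta-converse: it tests the joint distribution $\frac{1}{\sM_n}W^n(y^n|\phi_n(m))$ against the product $\frac{1}{\sM_n}W^n_{P_{\phi_n}}(y^n)$ with acceptance region $\cD_n^*=\bigcup\{m_1,\ldots,m_{\sL_n}\}\times\cD_{m_1,\ldots,m_{\sL_n}}$, obtaining $\log(\sM_n/\sL_n)\le (I(X^n;Y^n)+h(\cdot))/(1-\epsilon_A')$. Your list-Fano route $H(M|Y^n)\le \log\sL_n+o(n)$ followed by $I(M;Y^n)\le I(X^n;Y^n)$ is the textbook alternative and slightly more elementary; both land on the same single-letter inequality after $I(X^n;Y^n)\le nI(X;Y|U)_{P_n}$.

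Your remark that condition~(B) plays no role in this converse matches the paper, which likewise never invokes~(B).
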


\begin{thm}\Label{TH3}
A rate pair $(R_1,R_2)$ is strongly deterministically achievable
when 
the condition \eqref{OfIm} holds and
there exists a distribution $P \in {\cal P}(\cX)$
such that $\zeta_1(P)>0$ and 
\begin{align}
0 < R_1-R_2< I(X;Y)_P< R_1 < H(X)_P \Label{NHO}.
\end{align}
\hfill $\square$\end{thm}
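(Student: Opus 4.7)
My plan is a random coding argument with i.i.d.\ codewords from $P^n$ together with an information-density threshold list decoder, followed by a probabilistic-method extraction of a deterministic code. Fix $\eta > 0$ small enough that $R_1 - R_2 + \eta < I(X;Y)_P - \eta$ (allowed by $R_1 - R_2 < I(X;Y)_P$), set $\sM_n := \lceil 2^{nR_1}\rceil$, $\sL_n := \lceil 2^{nR_2}\rceil$, and draw codewords $\phi(1),\ldots,\phi(\sM_n)$ independently from $P^n$. With $\tau := I(X;Y)_P - \eta$, declare $m \in \Psi(y)$ iff the $n$-letter information density $\iota_n(\phi(m); y) := \sum_{i=1}^n \log \bigl(W(y_i|\phi_i(m))/W_P(y_i)\bigr)$ exceeds $n\tau$, with a list-size overflow above $\sL_n$ regarded as a decoding failure, an event I will show is rare.

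Conditions (A) and (B) are dispatched by standard arguments. For (A), under $Y^n \sim W^n_{\phi(m)}$ the information density $\iota_n(\phi(m);Y^n)$ has conditional mean $n I(X;Y)_P$ and variance at most $nV(W)$ by \eqref{OfIm}; Chebyshev gives $\Pr[\iota_n < n\tau] \to 0$. Simultaneously, the identity $\mathbb{E}_{\phi(m')\sim P^n}[W^n_{\phi(m')}(y)/W^n_P(y)] = 1$ and Markov's inequality bound the expected number of spurious list members by $\sM_n \cdot 2^{-n\tau} = 2^{n(R_1 - I(X;Y)_P + \eta)} < \sL_n$, and a Chernoff bound upgrades this to a high-probability bound on $|\Psi(Y^n)| \le \sL_n$. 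For (B), any single-element decoder $\tilde D$ satisfies the information-spectrum bound $\frac{1}{\sM_n}\sum_m W^n_{\phi(m)}(\tilde{\cal D}_m) \le \Pr[\iota_n(X^n;Y^n) \ge \log \sM_n - n\eta'] + 2^{-n\eta'}$, and the right-hand side vanishes because $\iota_n(X^n;Y^n)/n \to I(X;Y)_P < R_1 - \eta'$ by the LLN when $X^n \sim P^n$. The condition $R_1 < H(X)_P$ ensures that sampling provides enough randomness for the codewords to populate the typical set of $P^n$.

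The main obstacle is condition (D), which must hold uniformly over all $x \in \cX^n$ chosen by a dishonest sender. Fix any $x$ and any $m$ with $p_m(x) := \Pr_{Y^n\sim W^n_x}[m\in\Psi(Y^n)] > 1/2$. The condition $p_m(x) > 1/2$ together with Chebyshev (using \eqref{OfIm}) forces the mean score $\frac{1}{n}\sum_{i=1}^n F(x_i, \phi_i(m)|P) \ge \tau - o(1)$, effectively coupling $x$ and $\phi(m)$ on a typical fraction of coordinates. The hypothesis $\zeta_1(P) > 0$ now plays its central role: the definition $\zeta_1(P) = \min_{x\neq x'}[F(x,x|P) - F(x',x|P)]$ supplies a strictly positive per-letter gap between the correct row and any alternative row when scoring against a fixed column, which translates into a strictly positive Cram\'er exponent in the moment generating function of $\sum_i F(x_i, \phi_i(m')|P)$ for an independent codeword $\phi(m') \sim P^n$ achieving a score comparable to $n\tau$. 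A Chernoff bound then gives $\Pr_{\phi(m')}[p_{m'}(x) > \delta_C] \le 2^{-nE(\eta)}$ with an exponent $E(\eta) > 0$ inherited from $\zeta_1(P)$, and a union bound over $m' \neq m$ and over the polynomially-many types of $x$ (possibly combined with a mild expurgation step removing pairs of close codewords) closes the argument.

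The hardest technical step is to make $E(\eta)$ large enough for the union bound over $m'$ and type-classes to succeed; this is the quantitative content of $\zeta_1(P) > 0$, and is where I expect the real work of the proof to lie. Once this is done, the probabilistic method extracts a deterministic code satisfying (A), (B), (D) simultaneously, and the sizes $\sM_n, \sL_n$ confirm achievability of the rate pair $(R_1, R_2)$.
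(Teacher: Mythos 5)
Your plan for conditions (A) and (B) essentially matches the paper's: random codewords drawn i.i.d.\ from $P^n$, a threshold list decoder based on the information density $\log W^n_{x^n}(y^n)-\log W^n_{P^n}(y^n)$, Chebyshev for (A) via \eqref{OfIm}, and a Verd\'u--Han/Hayashi--Nagaoka information-spectrum bound for (B). The expurgation step is also present in the paper.

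The treatment of condition (D), however, has a genuine gap, and the route you sketch would not close. You propose to fix $x$, observe that $p_m(x)>1/2$ forces $F^n(x,\phi(m)\,|\,P)\gtrsim nR_3$, then for $m'\neq m$ apply a Chernoff bound over the random $\phi(m')$ and union over $m'$ and over types of $x$. The trouble is that the dishonest input $x$ is chosen adaptively \emph{after} the codebook is revealed, and the event ``$p_{m'}(x)>\delta_C$'' depends on the exact sequence $x$, not merely on its type. Within a single type there are exponentially many sequences $x$, so a union bound over types of $x$ does not suffice: the probability bound you would need is ``with high probability over the codebook, \emph{for all} $x$ in the type the event fails,'' which is not what a per-$x$ Chernoff estimate delivers. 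Your own phrase ``the hardest technical step is to make $E(\eta)$ large enough for the union bound'' names the symptom, but the problem is structural, not merely quantitative.

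The paper sidesteps any union bound over $x$ by making the control of $x$ entirely \emph{deterministic} given two codebook-level conditions, and this is where the quantity $\zeta_2(P)$ in \eqref{zetM} -- which your proposal never uses -- is essential. Lemma~\ref{LL11} uses $R_1<H(P)$ (via a Cram\'er-type bound on the moment generating function $G(s|P)$) to show that after expurgation, all codeword pairs are separated: $F^n(\phi(m),\phi(m')\,|\,P)<n(R_3-\epsilon_2)$ for all $m'\neq m$. Lemma~\ref{LL12} then argues deterministically: if $\epsilon_{A,m}(x,D)\le 1/2$, Chebyshev forces $F^n(x,\phi(m)\,|\,P)\gtrsim nR_3$, and $\zeta_1(P)>0$ converts this into a Hamming-distance bound $d(x,\phi(m))\lesssim n(I(P,W)-R_3+\epsilon_1)/\zeta_1(P)$; then the Lipschitz-in-Hamming-distance inequality
$F^n(x,\phi(m')\,|\,P)\le F^n(\phi(m),\phi(m')\,|\,P)+\zeta_2(P)\,d(x,\phi(m))$
caps the score of every $x$ in that Hamming ball simultaneously, and the choice $\epsilon_2=\epsilon_1+\tfrac{\zeta_2(P)}{\zeta_1(P)}(I(P,W)-R_3+\epsilon_1)$ is precisely what makes the two effects cancel and leave a margin $n\epsilon_1-O(\sqrt{n})$, after which one more Chebyshev step gives $\delta_{D,m}=O(1/n)$. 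Without the $\zeta_2(P)$ Lipschitz step, you have no way to pass from ``$x$ is close to $\phi(m)$'' to ``$x$ cannot score well against $\phi(m')$'' that is uniform over all exponentially many $x$.
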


In fact, 
the condition $R_1-R_2< I(X;Y)_P$ corresponds to Verifiable condition (A),
the condition $I(X;Y)_P< R_1 $ does to Non-decodable condition (B),
and
the conditions $R_1 < H(X)_P $ and $ \zeta_1(P)>0$ do to Non-cheating condition for dishonest Alice (D).
Theorems \ref{Converse} and \ref{TH3} are shown in Sections \ref{S-C} and \ref{S-K}, respectively.
We have the following corollaries from Theorems \ref{Converse} and \ref{TH3}. 

\begin{corollary}\Label{Cor46}
When the condition \eqref{OfIm} holds,
we have the following relation for $K=(s,s),(s,d),(w,d)$.
\begin{align}
\overline{{\cal R}_{K}} \cap \{(R_1,R_2) | R_1 \le H_0\}
=\overline{{\cal C}}\cap \{(R_1,R_2) | R_1 \le H_0\}.
\Label{TDAB}
\end{align}
\hfill $\square$
\end{corollary}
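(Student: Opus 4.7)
The plan is to derive Corollary~\ref{Cor46} as a near-immediate consequence of Theorems~\ref{Converse} and~\ref{TH3}, with a single time-sharing argument to sweep out the full region. For the ``$\subset$'' direction of~\eqref{TDAB}, I would invoke Theorem~\ref{Converse} verbatim for $K=(w,d)$ and $K=(s,s)$; for $K=(s,d)$ I would use the trivial containment ${\cal R}_{s,d}\subset{\cal R}_{s,s}$, since a deterministic encoder is a degenerate stochastic one. Closing and intersecting with $\{R_1\le H_0\}$ settles all three cases.

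For the reverse ``$\supset$'' direction, my plan is to prove it only for $K=(s,d)$ and then propagate. The propagation needs two containments: ${\cal R}_{s,d}\subset{\cal R}_{s,s}$ (determinism $\subset$ stochasticity, as above) and ${\cal R}_{s,d}\subset{\cal R}_{w,d}$. For the latter I would note that $x=\phi(m)$ is always admissible in the maximisation defining $\delta_{D,m}(D)$ whenever $\epsilon_A(\phi,D)\le 1/2$, yielding $\delta_C(\phi,D)\le\delta_D(D)$; hence any strongly secure sequence is eventually weakly secure.

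The heart of the direct part is an application of Theorem~\ref{TH3} with $P=P_{\max}$. Lemma~\ref{LX1} gives $\zeta_1(P_{\max})>0$, and by construction $I(X;Y)_{P_{\max}}=C(W)$ and $H(X)_{P_{\max}}=H_0$. Under~\eqref{OfIm} the theorem then delivers strong deterministic achievability of every $(R_1,R_2)$ satisfying $R_1-R_2<C(W)<R_1<H_0$. Passing to the closure, the parallelogram
\[
Q := \{(R_1,R_2)\mid C(W)\le R_1\le H_0,\ R_1-C(W)\le R_2\le R_1\}
\]
sits inside $\overline{{\cal R}_{s,d}}$.

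Finally I would time-share with the trivial one-message code: concatenate a code approximately achieving $(R_1',R_2')\in Q$ on $\lceil\lambda n\rceil$ channel uses with a fixed symbol repeated on the remaining $n-\lceil\lambda n\rceil$ uses. The combined code is deterministic, has rates $(\lambda R_1',\lambda R_2')$, and preserves all three security parameters: the trailing block carries a single fixed symbol, so a single-element decoder gains nothing from it ($\delta_B$ is unchanged) and a dishonest Alice gains nothing by choosing its input ($\delta_D$ is unchanged). This places $\lambda Q$ in $\overline{{\cal R}_{s,d}}$ for every $\lambda\in[0,1]$. A short case split (take $\lambda=1$ when $R_1>C(W)$; take $\lambda=R_1/C(W)$ together with $(R_1',R_2')=(C(W),R_2/\lambda)$ when $R_1\le C(W)$) identifies $\bigcup_{\lambda\in[0,1]}\lambda Q$ with $\{(R_1,R_2)\mid 0\le R_1\le H_0,\ [R_1-C(W)]_+\le R_2\le R_1\}$, and by Lemma~\ref{LA1} this is exactly $\overline{\cal C}\cap\{R_1\le H_0\}$. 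I expect the main technical obstacle to be the $\delta_D$ verification under time-sharing: one must argue carefully that dishonest Alice's freedom to choose the padding input is genuinely worthless because the decoder ignores that block.
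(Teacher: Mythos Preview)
Your proposal is correct and follows essentially the same approach as the paper: Theorem~\ref{Converse} for the containment $\subset$, Theorem~\ref{TH3} applied with $P=P_{\max}$ (using Lemma~\ref{LX1} for $\zeta_1(P_{\max})>0$) to obtain the parallelogram $C(W)\le R_1\le H_0$, and then time-sharing with a trivial rate-$(0,0)$ code to sweep out the triangle $0\le R_1\le C(W)$. Your write-up is actually more careful than the paper's on two points---the explicit verification that $\delta_C(\phi,D)\le\delta_D(D)$ once $\epsilon_A\le 1/2$ (giving ${\cal R}_{s,d}\subset{\cal R}_{w,d}$), and the flag on $\delta_D$ under time-sharing---both of which the paper leaves implicit here.
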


\begin{proof}
Since $\overline{{\cal R}_{s,d}}\subset \overline{{\cal R}_{s,s}},\overline{{\cal R}_{w,d}}$,
Theorem \ref{Converse} guarantees that
\begin{align}
\overline{{\cal R}_{K}} \cap \{(R_1,R_2) | R_1 \le H_0\}
\subset \overline{{\cal C}}\cap \{(R_1,R_2) | R_1 \le H_0\}
\Label{EB1}
\end{align}
for $K=(s,s),(s,d),(w,d)$.

Applying Theorem \ref{TH3} to the case with $P=P_{\max}$ we find that
any inner point of $\overline{{\cal C}}\cap \{(R_1,R_2) | C(W) \le R_1 \le H_0\}$ is achievable.
For any point $(R_1,R_2) $ of
$\overline{{\cal C}}\cap \{(R_1,R_2) | 0 \le R_1 \le C(W) \}$, 
there exist a real number $p \in [0,1]$ and
an inner point $(R_1',R_2') $ of $\overline{{\cal C}}\cap \{(R_1,R_2) | C(W) \le R_1 \le H_0\}$
such that 
$(R_1,R_2)=(p R_1',p R_2') $.
The rate pair $(R_1',R_2') $ is strongly deterministically achievable,
For $n$ transmissions, 
we choose a code for $p n$ transmissions to achieve the rate pair $(R_1',R_2') $ strongly deterministically,
and we choose a code for $(1-p) n$ transmissions with the rate $(0,0)$.
Then, the concatenated code strongly deterministically achieves the rate pair $(R_1,R_2)$.
Hence, we obtain the relation opposite to \eqref{EB1}.
\end{proof}

\begin{corollary}\Label{Cor56}
When the condition \eqref{OfIm} holds, and
the uniform distribution $P_{\uni}$ on ${\cal X}$ satisfies the condition $I(X;Y)_{P_{\uni}}=C(W)$,
we have the following relations.
\begin{align}
\overline{{\cal R}_{s,s}} =
\overline{{\cal R}_{s,d}} =
\overline{{\cal R}_{w,d}} =\overline{{\cal C}}.\Label{TDAC}
\end{align}
In this case, these capacity regions can be characterized by two real numbers $\log |\cX|$ and $C(W)$ as Fig. \ref{FF1}.
\hfill $\square$
\end{corollary}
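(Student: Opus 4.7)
The plan is to deduce this corollary directly from Corollary \ref{Cor46} by showing that the truncation $\{R_1 \le H_0\}$ appearing there is vacuous on both sides of the identity under the present hypothesis.

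First, I would verify that $H_0 = \log|{\cal X}|$. The hypothesis $I(X;Y)_{P_{\uni}} = C(W)$ places $P_{\uni}$ in the feasible set of the maximization defining $H_0$, and $H(X)_{P_{\uni}} = \log|{\cal X}|$ is the maximum possible value of $H(X)_P$ over $P \in {\cal P}({\cal X})$, which forces $H_0 = \log|{\cal X}|$. Next, Lemma \ref{LA1} gives $\overline{{\cal C}} \subset \{R_1 \le \log|{\cal X}|\}$ directly from its characterization, and combining this with Theorem \ref{Converse} yields $\overline{{\cal R}_K} \subset \{R_1 \le \log|{\cal X}|\}$ for $K = (w,d)$ and $K = (s,s)$. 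The remaining case $K = (s,d)$ follows from the trivial inclusion ${\cal R}_{s,d} \subset {\cal R}_{s,s}$, since a deterministic code is a special stochastic one and Conditions (A), (B), (D) are preserved.

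Finally, intersecting both sides of the identity \eqref{TDAB} in Corollary \ref{Cor46} with $\{R_1 \le H_0\} = \{R_1 \le \log|{\cal X}|\}$ becomes a no-op by the previous step, and this yields \eqref{TDAC} for all three choices of $K$. The assertion that the region is characterized by the two parameters $\log|{\cal X}|$ and $C(W)$ then follows because in the range $R_1 \le H_0$ the function $\kappa$ reduces to $[R_1 - C(W)]_+$, so Lemma \ref{LA1} expresses $\overline{{\cal C}}$ using only $\log|{\cal X}|$ and $C(W)$, matching Fig.~\ref{FF1}. I do not foresee a significant obstacle; this corollary is essentially a bookkeeping consequence of the structure of $H_0$ under the uniform-capacity-achieving hypothesis, with all the technical content already absorbed into Theorems \ref{Converse} and \ref{TH3} through Corollary \ref{Cor46}.
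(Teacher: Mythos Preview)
Your proposal is correct and follows essentially the same route as the paper: establish $H_0=\log|{\cal X}|$ from the hypothesis, use Theorem~\ref{Converse} (together with the inclusion ${\cal R}_{s,d}\subset{\cal R}_{s,s}$) to confine all the regions to $\{R_1\le H_0\}$, and then read off \eqref{TDAC} from Corollary~\ref{Cor46}. Your explicit justification of the two-parameter description via the reduction $\kappa(R_1)=[R_1-C(W)]_+$ is a small addition beyond what the paper writes out, but it is entirely in line with the intended argument.
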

For example, when the channel $W$ is additive, the assumption of Corollary \ref{Cor56} holds.

\begin{figure}[t]
\begin{center}
  \includegraphics[width=0.7\linewidth]{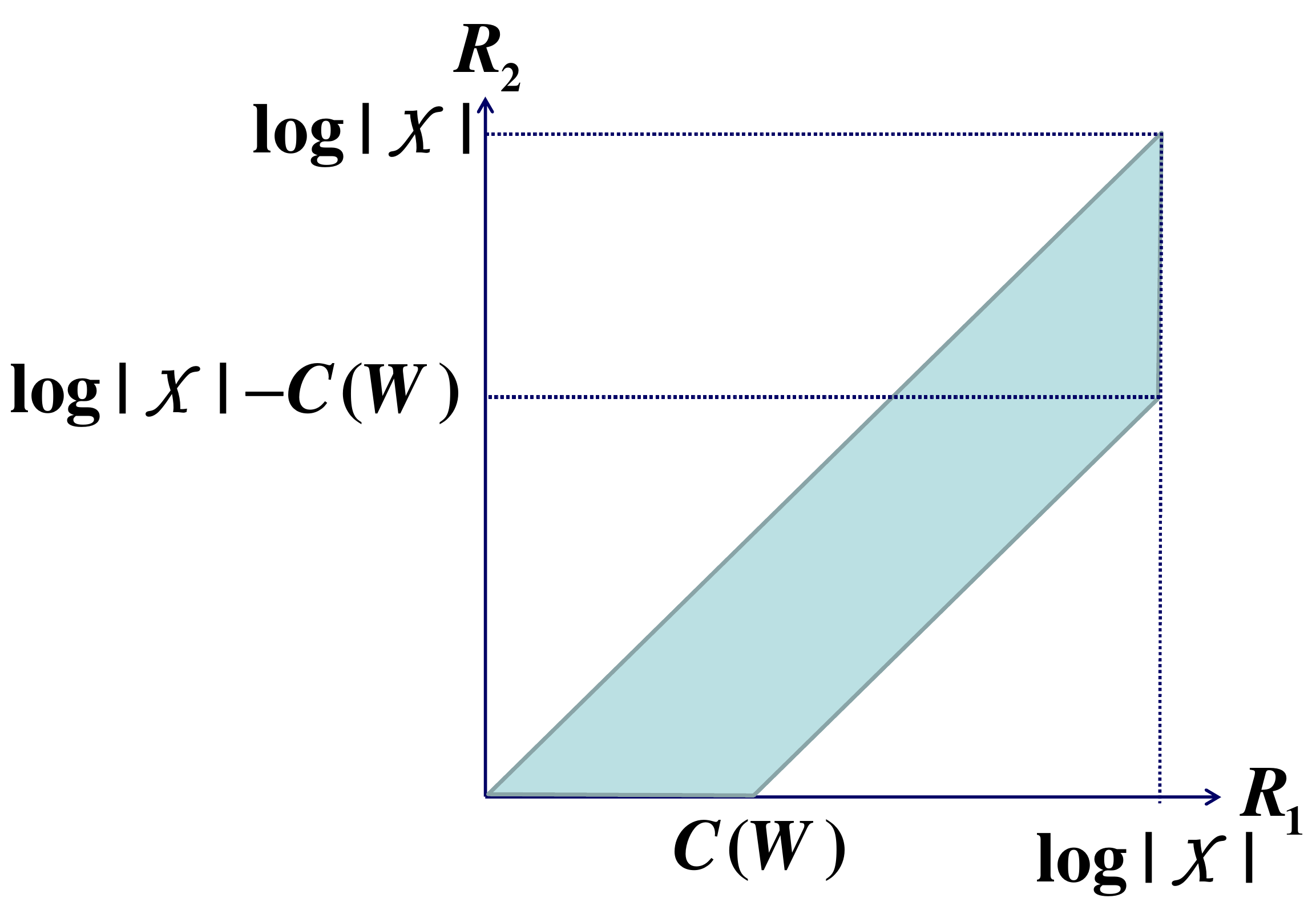}
  \end{center}
\caption{Capacity Region given in Corollary \ref{Cor56}.}
\Label{FF1}
\end{figure}

\begin{proof}
The assumption implies that $H_0= \log |\cX|=H(X)_{P_{\uni}}$.
Due to Theorem \ref{Converse}, for $K=(s,s),(s,d),(w,d)$,
any rate pair $(R_1,R_2)$ in ${\cal R}_{K}$ satisfies the condition $R_1 \le \log |\cX|=H(X)_{P_{\uni}}=H_0$.
Hence, the condition \eqref{TDAB} is rewritten as \eqref{TDAC}.
\end{proof}

\begin{corollary}\Label{Cor66}
When the condition \eqref{OfIm} holds, and
every distribution $P \in {\cal P}_0$ satisfies the condition \eqref{IMC}, i.e., $\zeta_1(P)>0$,
we have the relations \eqref{TDAC}.
In this case, these capacity regions can be characterized as Fig. \ref{FF2}.
\hfill $\square$
\end{corollary}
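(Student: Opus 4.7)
The plan is to establish $\overline{{\cal C}} \subset \overline{{\cal R}_{s,d}}$, since this combined with the inclusions $\overline{{\cal R}_{s,d}} \subset \overline{{\cal R}_{s,s}}$ and $\overline{{\cal R}_{s,d}} \subset \overline{{\cal R}_{w,d}}$ together with the converse $\overline{{\cal R}_K} \subset \overline{{\cal C}}$ from Theorem~\ref{Converse} (for $K=(s,s),(w,d)$) yields the three equalities in \eqref{TDAC}. Corollary~\ref{Cor46} already furnishes this inclusion on $\{R_1 \le H_0\}$, so only the regime $R_1 \ge H_0$ needs to be handled.

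First, I would pick $(R_1,R_2) \in {\cal C}$ with $R_1 \ge H_0$. By Lemma~\ref{LA2}, it lies in the convex hull of the open sets
\begin{align*}
{\cal A}_P := \{(R_1',R_2') \mid H_0 < R_1' < H(X)_P,~R_1' - I(X;Y)_P < R_2' < R_1'\},\quad P \in {\cal P}_0 ,
\end{align*}
and Carath\'eodory's theorem in $\bR^2$ lets me decompose $(R_1,R_2) = \sum_{i=1}^{k} \lambda_i (R_1^{(i)}, R_2^{(i)})$ with $k \le 3$ and each $(R_1^{(i)}, R_2^{(i)}) \in {\cal A}_{P_i}$ for some $P_i \in {\cal P}_0$.

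Second, for each component point I would apply Theorem~\ref{TH3} with $P = P_i$. The hypothesis $\zeta_1(P_i) > 0$ is exactly the assumption of the corollary, and its universality over ${\cal P}_0$ is precisely what distinguishes this case from Corollary~\ref{Cor56}. Membership in ${\cal A}_{P_i}$ supplies $R_1^{(i)} - R_2^{(i)} < I(X;Y)_{P_i}$ and $R_1^{(i)} < H(X)_{P_i}$, and the remaining inequality $I(X;Y)_{P_i} < R_1^{(i)}$ of \eqref{NHO} follows from $I(X;Y)_{P_i} \le C(W) \le H_0 < R_1^{(i)}$, the middle bound using that every capacity-achieving $P$ satisfies $H(X)_P \ge I(X;Y)_P = C(W)$. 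Thus each $(R_1^{(i)}, R_2^{(i)})$ is strongly deterministically achievable.

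Third, I would glue the resulting $k$ component codes by block-serial concatenation, allotting $\lambda_i n$ channel uses to the $i$th code and taking the joint message set and the joint decoding list to be the Cartesian products of the components. The overall rates are the desired convex combinations. Security transfers as follows: (A) by a union bound over blocks; (B) from the product-channel fact that the optimal single-element decoder factors and hence $\delta_B$ is multiplicative across blocks; and (D) from the observation that if $\Pr[m \in \Psi(Y)|X=x] > 1/2$ in the product code, then the factorization into per-block $[0,1]$-probabilities forces every factor to exceed $1/2$, activating the per-block (D) bound. A cheating $m' \neq m$ differs in at least one coordinate, and by the product list structure, single-coordinate cheats are bounded by the corresponding $\delta_C^{(i)}$ while multi-coordinate cheats give only smaller products, yielding a combined $\delta_C \le \max_i \delta_C^{(i)} \to 0$.

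The main obstacle will be the time-sharing verification of condition (D): confirming both that the honesty hypothesis $\Pr[m \in \Psi(Y)|X=x] > 1/2$ descends to every block, and that all coordinate-wise cheating attempts are subsumed by a single per-block $\delta_C^{(i)}$ bound. This is precisely where the universality of ``$\zeta_1(P) > 0$ for every $P \in {\cal P}_0$'' is used substantively, since Carath\'eodory's decomposition may produce any finite family $P_i \in {\cal P}_0$ and Theorem~\ref{TH3} must be applicable to each of them.
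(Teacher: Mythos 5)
Your proposal is correct and matches the paper's own proof in structure: reduce to $R_1 \ge H_0$ via Corollary~\ref{Cor46}, invoke Lemma~\ref{LA2} and Theorem~\ref{TH3} on each extreme component of the convex hull, and then handle convex combinations by block concatenation, verifying (A) by a union bound, (B) by the product structure of the optimal MAP decoder, and (D) by observing that $\Pr[m\in\Psi(Y)\mid X=x]>1/2$ in the product forces each per-block probability above $1/2$. One small point in your favour: you conclude $\delta_D$ for the concatenated code is bounded by $\max_i \delta_D^{(i)}$, which is the correct bound (a cheat can differ in a single block); the paper writes $\min$ at the corresponding step, which appears to be a slip, though it does not affect the conclusion since both tend to zero.
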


\begin{proof}
Due to Corollary \ref{Cor46}, it is sufficient to show 
\begin{align}
\overline{{\cal R}_{K}} \cap \{(R_1,R_2) | H_0\le R_1 \le \log |\cX|\}
=\overline{{\cal C}}\cap \{(R_1,R_2) | H_0\le R_1 \le \log |\cX| \}
\Label{EB5}
\end{align}
for $K=(s,s),(s,d),(w,d)$.
Since $\overline{{\cal R}_{s,d}}\subset \overline{{\cal R}_{s,s}},\overline{{\cal R}_{w,d}}$,
Theorem \ref{Converse} guarantees the relation
$\subset$ in \eqref{EB5} for $K=(s,s),(s,d),(w,d)$.
It  is sufficient to show the achievability of any rate pair in 
${{\cal C}}\cap \{(R_1,R_2) | H_0\le R_1 \le \log |\cX| \}$.

The combination of the assumption and Theorem \ref{TH3} guarantees that 
the region $\cup_{P \in {\cal P}_0}
\{(R_1,R_2)| H_0 \le R_1 \le H(X)_P ,~
R_1- I(X;Y)_P\le R_2\le R_1\})$
is strongly deterministically achievable.
Hence, the relation $\supset$ in \eqref{EB5} follows from Lemma \ref{LA2}
if we can show the following statement;
Any convex combination of elements of the region $\cup_{P \in {\cal P}_0}
\{(R_1,R_2)| H_0 < R_1 < H(X)_P ,~
R_1- I(X;Y)_P< R_2< R_1\})$
is strongly deterministically achievable.

To show the above required statement, we assume that two sequences 
$\{(\phi_n,D_n)\}$ and $\{(\phi_n',D_n')\}$ 
of deterministic codes 
are strongly secure.
Then, we define the concatenation $\{(\phi_{2n}'',D_{2n}'')\}$ 
as follows.
When $\phi_n$($\phi_n'$) is given as a map from ${\cal M}$(${\cal M}'$) to ${\cal X}^n$,
the encoder $\phi_{2n}'' $ is given as a map 
from $(m,m')\in {\cal M} \times {\cal M}'$ to $(\phi_n(m),\phi_n'(m'))\in {\cal X}^{2n}$. 
The decoder $D_{2n}''$ is given as a map from $(y_1, \ldots, y_{2n})\in {\cal Y}^{2n}$
to $(D_{n}(y_1, \ldots, y_n),D_{n}'(y_{n+1}, \ldots, y_{2n})) \in {\cal M}^{\sL} \times {{\cal M}'}^{\sL'}$.
We have 
$\epsilon_A(\phi_{2n}'',D_{2n}'') \le \epsilon_A(\phi_n,D_n) + \epsilon_A(\phi_n',D_n') $
because 
the code $(\phi_{2n}'',D_{2n}'')$ is correctly decoded when 
both codes $(\phi_n,D_n)$ and $(\phi_n',D_n')$ are correctly decoded.
Since the message encoded by $\phi_{2n}''$ is correctly decoded only when 
both messages encoded by encoders $\phi_n$ and $\phi_n'$ are correctly decoded,
we have
$\delta_B(\phi_{2n}'') \le \min (\delta_B(\phi_n) ,\delta_B(\phi_n')) $.
Alice can cheat the decoder $D_{2n}''$
only when 
Alice cheats one of the decoders $D_{n}$ and $D_{n}'$.
Hence,
$\delta_D(D_{2n}'') \le \min( \delta_D(D_n),\delta_D(D_n'))$.
Therefore, 
the concatenation $\{(\phi_{2n}'',D_{2n}'')\}$ is also strongly secure.
That is,
any convex combination of elements of the region $\cup_{P \in {\cal P}_0}
\{(R_1,R_2)| H_0 \le R_1 \le H(X)_P ,~
R_1- I(X;Y)_P\le R_2\le R_1\})$
is strongly deterministically achievable.
\end{proof}

\begin{figure}[t]
\begin{center}
  \includegraphics[width=0.7\linewidth]{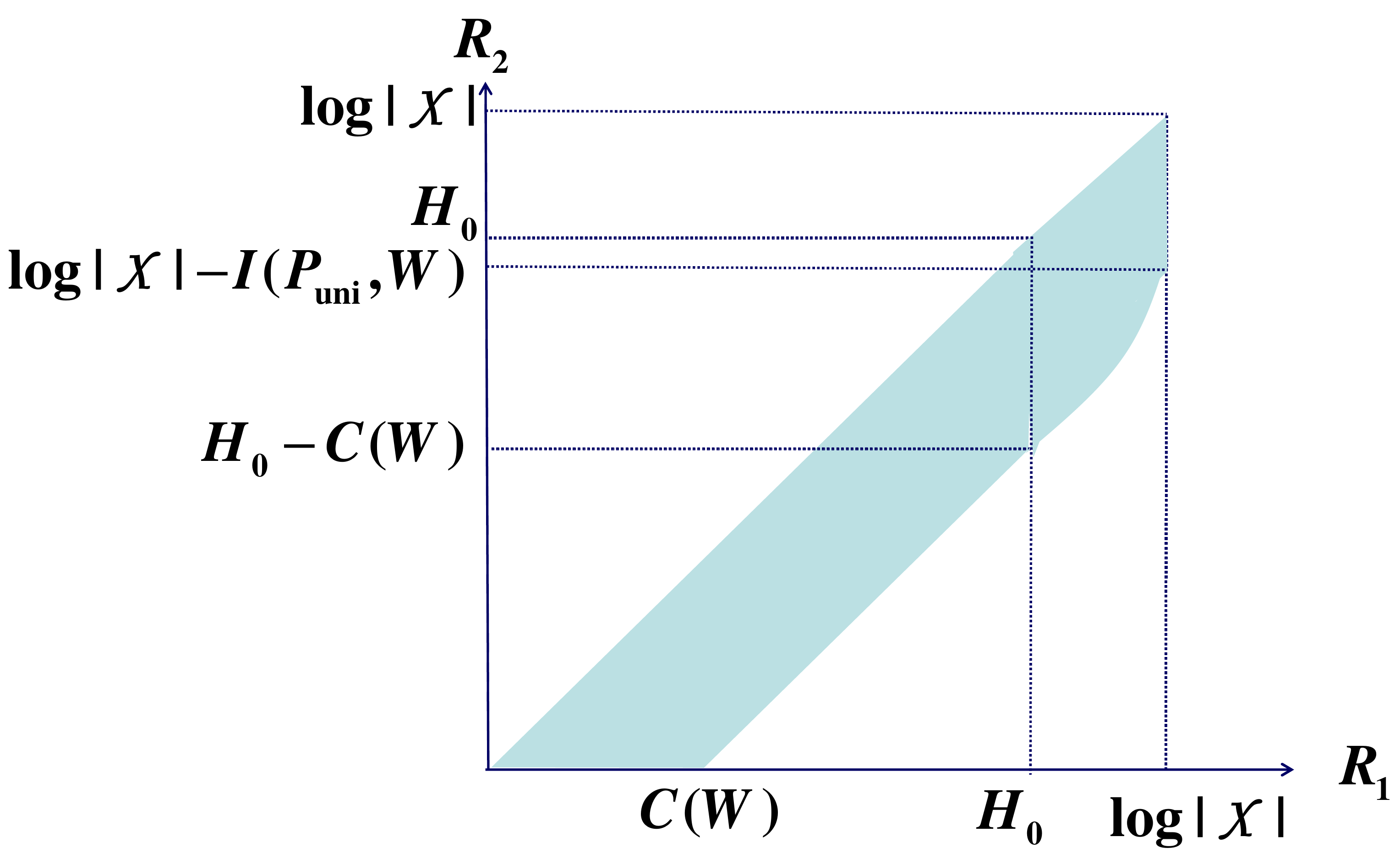}
  \end{center}
\caption{Capacity Region given in Corollary \ref{Cor66}.}
\Label{FF2}
\end{figure}   

\section{Proof of Converse Theorem}\Label{S-C}
In order to show Theorem \ref{Converse}, we prepare the following lemma.
\begin{lem}\Label{L4}
For $X^n=(X_{1}, \ldots, X_{n})$, 
we choose the joint distribution $P_{X^n}$.
Let $Y^n=(Y_{1}, \ldots, Y_{n})$ be the channel output variables of the inputs $X^n$ via the channel $W$.
Then, we have
\begin{align}
I(X^n;Y^n) & \le \sum_{j=1}^n I( X_{j};Y_j ),\Label{LLP3}\\
H(X^n) & \le \sum_{j=1}^n H( X_{j}).\Label{LLP2}
\end{align}
\hfill $\square$
\end{lem}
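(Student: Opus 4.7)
The plan is to prove both inequalities using standard entropy identities together with the memoryless structure of the channel. Neither inequality requires any of the novel machinery developed in the paper; both are classical facts, and I will organize the argument so that the two statements share the same core ingredient (subadditivity of Shannon entropy).

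For the second inequality, I would apply the chain rule to obtain
\begin{align}
H(X^n)=\sum_{j=1}^{n} H(X_j\mid X_1,\ldots,X_{j-1}),
\end{align}
and then invoke the fact that conditioning cannot increase entropy, giving $H(X_j\mid X_1,\ldots,X_{j-1})\le H(X_j)$ termwise. Summing yields \eqref{LLP2}.

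For the first inequality, I would write $I(X^n;Y^n)=H(Y^n)-H(Y^n\mid X^n)$. The key observation is that, since $W^n$ is the $n$-fold memoryless extension of $W$, the conditional law of $Y^n$ given $X^n$ factorizes as $\prod_{j=1}^n W(y_j\mid x_j)$. This gives the exact identity $H(Y^n\mid X^n)=\sum_{j=1}^{n} H(Y_j\mid X_j)$ (no inequality, because conditionally on $X^n$ the outputs $Y_j$ are independent, and each $Y_j$ depends on $X^n$ only through $X_j$). For the other term, the entropy subadditivity argument used for \eqref{LLP2}, now applied to $Y^n$, yields $H(Y^n)\le\sum_{j=1}^{n} H(Y_j)$. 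Combining these two bounds gives
\begin{align}
I(X^n;Y^n)\le \sum_{j=1}^{n} H(Y_j)-\sum_{j=1}^{n} H(Y_j\mid X_j)=\sum_{j=1}^{n} I(X_j;Y_j),
\end{align}
which is \eqref{LLP3}.

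There is no substantial obstacle here: the only point that needs care is making explicit that the memorylessness of $W^n$ is what turns the conditional chain-rule decomposition of $H(Y^n\mid X^n)$ into a sum over single-letter conditional entropies, since the joint distribution $P_{X^n}$ is not assumed to be a product. Once that is spelled out, the two inequalities follow immediately from the chain rule and subadditivity of entropy.
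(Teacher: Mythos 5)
Your proof is correct and follows essentially the same route as the paper: chain rule plus conditioning-reduces-entropy for \eqref{LLP2}, and the decomposition $I(X^n;Y^n)=H(Y^n)-H(Y^n|X^n)$ combined with memorylessness (to factor $H(Y^n|X^n)$ exactly) and subadditivity of $H(Y^n)$ for \eqref{LLP3}. The only difference is that you make the role of memorylessness explicit, which is a welcome clarification but not a departure from the paper's argument.
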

Although this lemma is not so difficult,
for readers' convenience, we give the proof in Appendix \ref{A1}.

\noindent{\it Proof of Theorem \ref{Converse}:}\quad
The proof of Theorem \ref{Converse} is composed of two parts.
The first part is the evaluation of $R_1$.
The second part is the evaluation of $R_1-R_2$.
The key point of the first part is the use of \eqref{LLP3} in Lemma \ref{L4}.
The key point of the second part is the meta converse for list decoding \cite[Section III-A]{Haya}. 

\noindent{\bf Step 1:} Preparation.

\noindent We show Theorem \ref{Converse} by showing the following relations
\begin{align}
{\cal R}_{w,d}  &\subset 
\cup_{P\in {\cal P}({\cal U}\times {\cal X})} 
\{ (R_1,R_2) | 0 \le R_1-R_2\le I(X;Y|U)_P , ~R_1 \le H(X|U)_P ,~0\le R_1,~0\le R_2\}  
 ,\Label{Con1} \\
{\cal R}_{s,s} &\subset 
\cup_{P\in {\cal P}({\cal U}\times {\cal X})} 
\{ (R_1,R_2) | 0 \le R_1-R_2\le I(X;Y|U)_P , ~R_1 \le H(X|U)_P ,~0\le R_1,~0\le R_2\} .\Label{Con2B} 
\end{align}

Assume that a sequence of deterministic codes $\{(\phi_n,D_n)\}$ is 
weakly secure.
We assume that 
$R_i:=\lim_{n\to \infty}\frac{1}{n}\log |(\phi_n,D_n)|_i$ converges for $i=1,2$.
For the definition of $|(\phi_n,D_n)|_i$,
see the end of Section \ref{S2-1B}.
Letting $M$ be the random variable of the message, 
we define the variables $X^n=(X_1, \ldots, X_n):= \phi_n(M)$. 
The random variables $Y^n=(Y_1, \ldots, Y_n)$ are defined as the output of the channel $W^n$, 
which is the $n$ times use of the channel $W$.
We define the joint distribution $P_n \in {\cal P}({\cal U}\times \cX)$ by $P_n(x,i):= \frac{1}{n} P_{X_i}(x)$ for 
$i=1, \ldots, n$ and $x \in \cX$ by choosing ${\cal U}=\{1, \ldots, n\}$.
Hence, the variable $U$ is subject to the uniform distribution on $\{1, \ldots, n\}$.
Under the distribution $P_{n}$, we denote the channel output by $Y$.
In this proof, 
we use the notations $\sM_n: =|(\phi_n,D_n)|_1$ and $\sL_n:=|(\phi_n,D_n)|_2 $.
Also, instead of $\epsilon_{A}(\phi_n,D_n)$, we employ
$\epsilon_{A}'(\phi_n,D_n):= \sum_{m=1}^{\sM_n} \frac{1}{\sM_n}
\epsilon_{A,m}(\phi_n(m),D_n) $, which goes to zero. 

\noindent{\bf Step 2:} Evaluation of $R_1$.

\noindent 
For a code $(\phi_n,D_n)$, 
we have
\begin{align}
\log |(\phi_n,D_n)|_1
\stackrel{(a)}{\le}&
  H(X^n)
+\epsilon_{A}(\phi_n,D_n)\log |(\phi_n,D_n)|_1+\log 2 \nonumber \\
\stackrel{(b)}{\le}&
n H(X|U)_{P_n}
+\epsilon_{A}(\phi_n,D_n)\log |(\phi_n,D_n)|_1+\log 2 \Label{MGP},
\end{align}
where $(b)$ follows from Lemma \ref{L4}.
Dividing the above by $n$ and taking the limit, we have
\begin{align}
\limsup_{n\to \infty}R_1 - H(X|U)_{P_n} \le 0 \Label{MGP2}.
\end{align}

To show $(a)$ in \eqref{MGP}, we consider the following protocol.
After converting the message $M$ to $X^n$ by the encoder $\phi_n(M)$,
Alice sends the $X^n$ to Bob $K$ times.
Here, we choose $K$ to be an arbitrary large integer.
Applying the decoder $D_n$, Bob obtains $K$ lists that contain up to $\sL^K$ messages.
Among these messages, Bob chooses  
$\hat{M}$ as the element that most frequently appears in the $K$ lists. 
When $ \delta_C(\phi_n,D_n)<1- \epsilon_{A,M}(\phi_n(M),D_n)$ and $K$ is sufficiently large,
Bob can correctly decode $M$ by this method
because $1- \epsilon_{A,M}(\phi_n(M),D_n)$ is the probability that the list contains $M$
and $ \delta_C(\phi_n,D_n)$ is the maximum of the probability that the list
contains $m'\neq M$, i.e., 
the element $M$ has the highest probability to be contained in the list.
Therefore, the failure of decoding is limited to the case when 
$ 1-\delta_C(\phi_n,D_n) \le \epsilon_{A,M}(\phi_n(M),D_n)$,
Since the average of $\epsilon_{A,M}(\phi_n(M),D_n) $ is $\epsilon_{A}'(\phi_n,D_n)$,
Markov inequality guarantees that 
the error probability of this protocol is bounded by 
$\epsilon':=\frac{\epsilon_{A}'(\phi_n,D_n)}{1-\delta_C(\phi_n,D_n)}$.
Fano inequality shows that 
$H(M|\hat{M}) \le \epsilon'\log |(\phi_n,D_n)|_1+\log 2$.
Then, we have
\begin{align}
\log |(\phi_n,D_n)|_1-\epsilon'\log |(\phi_n,D_n)|_1+\log 2
 \le \log |(\phi_n,D_n)|_1- H(M|\hat{M})=I(M;\hat{M}) \le H(X^n),
\end{align}
which implies $(a)$ in \eqref{MGP}.

\noindent{\bf Step 3:} Evaluation of $R_1-R_2$.

\noindent Now, we consider the hypothesis testing with two distributions 
$P(m,y^n):=\frac{1}{\sM_n} W^n(y^n| \phi_n(m))$ 
and $Q(m,y^n):=\frac{1}{\sM_n} \sum_{m=1}^{\sM_n} W^n(y^n| \phi_n(m))$
on ${\cal M}_n\times {\cal Y}^n$, where 
${\cal M}_n:=\{1, \ldots,  \sM_n\}$.
Then, we define the region ${\cal D}_n^*\subset {\cal M}_n\times {\cal Y}^n $
as $\cup_{m_1, \ldots, m_{\sL_n}} \{ m_1, \ldots, m_{\sL_n}\} \times {\cal D}_{m_1, \ldots, m_{\sL_n}}$.
Using the region ${\cal D}_n^*$ as our test, we define 
$\epsilon_Q$ as the error probability to incorrectly support $P$ while the true is $Q$.
Also, we define $\epsilon_P$ as the error probability to incorrectly support $Q$ while the true is $P$.
When we apply the monotonicity for the KL divergence between $P$ and $Q$, 
 dropping the term $\epsilon_{P}\log (1- \epsilon_Q ) $,
we have
\begin{align}
-\log \epsilon_Q 
\le \frac{D(P\|Q)+h(1-\epsilon_{P})}
{1-\epsilon_{P}}.
\Label{MK5}
\end{align}
The meta converse for list decoding \cite[Section III-A]{Haya} 
shows that $\epsilon_Q \le \frac{|(\phi_n,D_n)|_2}{|(\phi_n,D_n)|_1} $
and $\epsilon_P= \epsilon_{A}'(\phi_n,D_n)$.
Since Lemma \ref{L4} guarantees that
$D(P\|Q)= I(X^n;Y^n)\le n I(X;Y|U)_{P_n}$, 
the relation \eqref{MK5} is converted to 
\begin{align}
 \log \frac{|(\phi_n,D_n)|_1}{|(\phi_n,D_n)|_2}
\le \frac{I(X^n;Y^n)+h(1-\epsilon_{A}'(\phi_n,D_n))}
{1-\epsilon_{A}'(\phi_n,D_n)} 
\le \frac{n I(X;Y|U)_{P_n}+h(1-\epsilon_{A}'(\phi_n,D_n))}
{1-\epsilon_{A}'(\phi_n,D_n)} 
\Label{MK1}.
\end{align}
Dividing the above by $n$ and taking the limit, we have
\begin{align}
\limsup_{n\to \infty}R_1-R_2 - I(X;Y|U)_{P_n} \le 0 \Label{MK12}.
\end{align}
Therefore, combining Eqs. \eqref{MGP2} and \eqref{MK12},
we obtain Eq. \eqref{Con1}.

\noindent{\bf Step 4:} Proof of Eq. \eqref{Con2B}.

\noindent 
Assume that a sequence of stochastic codes $\{(\phi_n,D_n)\}$ is 
strongly secure.
Then, there exists a sequence of deterministic encoders $\{\phi_n'\}$
such that
$\epsilon_{A}(\phi_n',D_n) \le \epsilon_{A}(\phi_n,D_n)$
and
$\delta_C(\phi_n',D_n) \le \delta_D(D_n)$.
Since $\epsilon_{A}(\phi_n',D_n)\to 0$ and 
$\delta_C(\phi_n',D_n)\to 0$, 
combining Eq. \eqref{Con1}, we have Eq. \eqref{Con2B}.
\endproof



\section{Proof of direct theorem}\Label{S-K}
Here, we prove the direct theorem (Theorem \ref{TH3}).
The proof is based on the random coding.
In this proof, we prepare three lemmas, 
Lemmas \ref{LL12}, \ref{LL10}, and \ref{LL11}.
Using Lemmas \ref{LL10}, and \ref{LL11}, we 
extract an encoder $\phi_n$ and messages $m$ with small decoding error probability
to satisfy two conditions, which will be stated as 
the conditions \eqref{CC1} and \eqref{CC2}.
Then, using these two conditions, 
we show that the code satisfies Non-cheating condition for dishonest Alice (D)
and Non-decodable condition (B).
In particular, Lemma \ref{LL12} is used
to derive Non-cheating condition for dishonest Alice (D).

\subsection{Preparation}
To show Theorem \ref{TH3}, we prepare notations and basic facts.
Assume that $R_1$ and $R_2$ satisfies the condition \eqref{NHO}.
First, given a real number $R_3 <R_1$, we fix 
the size of message $\sM_n:=2^{n R_1}$,
the list size $\sL_n:=2^{n R_2}$,
and a number $\sM_n':=2^{n R_3}$, which is smaller than 
the message size $\sM_n$.
Then, we prepare the decoder used in this proof as follows.
\begin{define}[Decoder $D_{\phi_n}$]\Label{Def1}
Given a distribution $P$ on ${\cal X}$, 
we define the decoder $D_{\phi_n}$ for a given encoder $\phi_n$ (a map from
$\{1, \ldots, \sM_n\}$ to ${\cal X}^n$) in the following way.
We define the subset ${\cal D}_{x^n}:= \{ y^n| W_{x^n} (y^n) \ge \sM_n' W_{P^n}^{n}(y^n) \}$.
Then, for $y^n \in {\cal Y}^n$, 
we choose up to $\sL_n$ elements $i_1, \ldots, i_{\sL_n'}$ $(\sL_n'\le \sL_n)$
as the decoded messages
such that $ y^n \in {\cal D}_{\phi_n(i_j)}$ for $j=1, \ldots, \sL_n'$.
\hfill $\square$
\end{define}

For $x^n,{x^n}'\in {\cal X}^n$, we define
\begin{align}
F^n(x^n,{x^n}'|P)&:=\sum_{i=1}^n F(x^n_i,{x^n_i}'|P)\Label{I11},
\end{align}
and define 
$d(x^n,{x^n}')$ to be the number of 
$k$ such that $x_k \neq {x_k}'$.
In the proof of Theorem \ref{TH3}, 
we need to extract an encoder $\phi_n$ and elements $m \in {\cal M}_n$ that satisfies the following two conditions for $R_3,\epsilon_1,\epsilon_2 >0$;
\begin{align}
F^n(\phi_n(m),\phi_n(m)|P) &< n (I(P,W)+\epsilon_1) \Label{CC1} \\
F^n(\phi_n(m),\phi_n(j)|P) &< n (R_3-\epsilon_2) 
\hbox{ for }\forall j\neq m. \Label{CC2} 
\end{align}
For this aim, given a code $\phi_n$ and real numbers,
$R_3,\epsilon_1,\epsilon_2 >0$, we define the functions 
$\eta_{\phi_n,\epsilon_1}^A$
and
$\eta_{\phi_n,\epsilon_2,R_3}^C$
from ${\cal M}_n $ to $\{0,1\}$ as
\begin{align}
\eta_{\phi_n,\epsilon_1}^A(m) &:=
\left\{
\begin{array}{ll}
0 & \hbox{ when \eqref{CC1} holds} \\
1 & \hbox{ otherwise }
\end{array}
\right. \Label{De1}\\
\eta_{\phi_n,\epsilon_2,R_3}^C(m) &:=
\left\{
\begin{array}{ll}
0 & \hbox{ when \eqref{CC2} holds} \\
1 & \hbox{ otherwise. }
\end{array}
\right. \Label{De2}
\end{align}

As shown in Section \ref{S7-C}, we have the following lemma.
\begin{lem}\Label{LL12}
For arbitrary real numbers $\epsilon_1>0$ and $R_3<I(P,W)$, 
we choose  $ \epsilon_2:=\epsilon_1+ \frac{\zeta_2(P)}{\zeta_1(P)}  (I(P,W)- R_3+\epsilon_1)$.
When a code $\tilde{\phi}$ defined in the message set $\tilde{{\cal M}}_n$
satisfies
\begin{align}
\max_{m' (\neq m) \in \tilde{{\cal M}}_n} 
F^n(\tilde{\phi}_n(m),\tilde{\phi}_n(m')|P) &< n (R_3-\epsilon_2) \Label{E41}\\
F^n(\tilde{\phi}_n(m),\tilde{\phi}_n(m)|P) &< n (I(P,W)+\epsilon_1)  
\Label{E42}
\end{align}
for an element $ m \in \tilde{{\cal M}}_n$,
we have
\begin{align}
&\delta_{D,m}(D_{\tilde{\phi}_n}) 
\le
 \frac{V(W)}{ n (\epsilon_1
- \frac{\zeta_2(P)\sqrt{2 V}}{\zeta_1(P)\sqrt{ n}} 
) ^2}.
\end{align}
\hfill $\square$
\end{lem}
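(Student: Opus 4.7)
The plan is to show that any dishonest input $x^n$ triggering $\epsilon_{A,m}(x^n,D_{\tilde{\phi}_n})\le 1/2$ must be Hamming-close to $\tilde{\phi}_n(m)$, and then to transport the separation \eqref{E41} from $\tilde{\phi}_n(m)$ to $x^n$ by Lipschitz-type continuity of the functional $F^n(\cdot,\tilde{\phi}_n(m')|P)$ controlled by $\zeta_2(P)$. Throughout, I would work with the log-likelihood functional $L_{m'}(Y^n):=\log W_{\tilde{\phi}_n(m')}(Y^n)-\log W_{P^n}^n(Y^n)$, so that ${\cal D}_{\tilde{\phi}_n(m')}=\{L_{m'}\ge nR_3\}$ by Definition \ref{Def1}; tensorization gives $\mathbb{E}_{x^n}L_{m'}=F^n(x^n,\tilde{\phi}_n(m')|P)$ and $\mathbb{V}_{x^n}L_{m'}\le nV(W)$ under \eqref{OfIm}.

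The first step is to reduce Condition (D) to tail bounds on $L_{m'}$. Since $m\in\Psi(y^n)$ implies $y^n\in{\cal D}_{\tilde{\phi}_n(m)}$, the hypothesis $\epsilon_{A,m}(x^n,D_{\tilde{\phi}_n})\le 1/2$ forces $\Pr_{W_{x^n}}[L_m\ge nR_3]\ge 1/2$, while $\Pr[m'\in\Psi(Y^n)\mid X^n=x^n]\le \Pr_{W_{x^n}}[L_{m'}\ge nR_3]$ for every $m'\neq m$. Applying Chebyshev's inequality to $L_m$ then converts the first bound into
\begin{align*}
F^n(x^n,\tilde{\phi}_n(m)|P)\ge nR_3-\sqrt{2nV(W)}.
\end{align*}

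Next I would promote this statistical proximity to a combinatorial Hamming bound via $\zeta_1(P)$. By \eqref{zetm}, at each coordinate $i$ with $x^n_i\neq\tilde{\phi}_n(m)_i$ one has $F(\tilde{\phi}_n(m)_i,\tilde{\phi}_n(m)_i|P)-F(x^n_i,\tilde{\phi}_n(m)_i|P)\ge\zeta_1(P)$; summing and using \eqref{E42} yields
\begin{align*}
d(x^n,\tilde{\phi}_n(m))\,\zeta_1(P)\le n(I(P,W)+\epsilon_1-R_3)+\sqrt{2nV(W)}.
\end{align*}
By \eqref{zetM}, the opposite Lipschitz bound
\begin{align*}
F^n(x^n,\tilde{\phi}_n(m')|P)\le F^n(\tilde{\phi}_n(m),\tilde{\phi}_n(m')|P)+d(x^n,\tilde{\phi}_n(m))\,\zeta_2(P)
\end{align*}
holds for every $m'$. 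Plugging in \eqref{E41}, the previous Hamming estimate, and the prescribed $\epsilon_2=\epsilon_1+\tfrac{\zeta_2(P)}{\zeta_1(P)}(I(P,W)-R_3+\epsilon_1)$ produces an exact cancellation of the $n$-linear terms and leaves
\begin{align*}
F^n(x^n,\tilde{\phi}_n(m')|P)\le nR_3-n\Bigl(\epsilon_1-\frac{\zeta_2(P)\sqrt{2V(W)}}{\zeta_1(P)\sqrt{n}}\Bigr).
\end{align*}

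A second Chebyshev application to $L_{m'}$ with variance bound $nV(W)$ then gives the stated upper bound on $\Pr_{W_{x^n}}[L_{m'}\ge nR_3]$; taking maxima over $m'\neq m$ and over admissible $x^n$ finishes the argument. The main obstacle is the Hamming step: the hypothesis $\zeta_1(P)>0$ is exactly what promotes the likelihood-ratio proximity of $x^n$ to $\tilde{\phi}_n(m)$ forced by the first Chebyshev step into combinatorial proximity, and the prescribed $\epsilon_2$ is engineered so that the $\zeta_1$-budget absorbs the $\zeta_2$-Lipschitz cost exactly; without either of these matching ingredients the resulting bound would not be $O(1/n)$.
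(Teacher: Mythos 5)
Your proposal reproduces the paper's argument step for step: the two Chebyshev applications (one to deduce $F^n(x^n,\tilde{\phi}_n(m)|P)\ge nR_3-\sqrt{2nV(W)}$ from the half-probability hypothesis, one at the end to bound $W^n_{x^n}({\cal D}_{\tilde{\phi}_n(m')})$), the $\zeta_1$-based conversion of likelihood proximity into a Hamming-distance bound, the $\zeta_2$-Lipschitz transport of \eqref{E41} from $\tilde{\phi}_n(m)$ to $x^n$, and the exact cancellation engineered by the choice of $\epsilon_2$. This is the same approach as the paper's proof, and it is correct.
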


\subsection{Proof of Theorem \ref{TH3}}
\noindent {\bf Step 1}: Lemmas related to random coding.

\noindent To show Theorem \ref{TH3},
we assume that the variable $\Phi_n(m)$ for $m \in {\cal M}_n$
is subject to the distribution $P^n$ independently.
Then, we have the following two lemmas, which are shown later.
In this proof, we treat the code $\Phi_n$ as a random variable.
Hence, the expectation and the probability for this variable
are denoted by $\rE_{\Phi_n} $ and ${\rm Pr}_{\Phi_n}$, respectively.

\if0
\begin{proposition}\Label{P1}
When $R_1 > I(P,W)$, we have
\begin{align}
\rE_{\Phi_n} \delta_B(\Phi_n) \to 0.
\Label{C4}
\end{align}
\end{proposition}
\fi

\begin{lem}\Label{LL10}
When 
\begin{align}
I(P,W)&>R_3,\quad R_3 \ge R_1-R_2,\Label{C4}
\end{align}
we have the average version of Verifiable condition (A), i.e., 
\begin{align}
\lim_{n \to \infty}
\rE_{\Phi_n} 
\sum_{m=1}^{\sM_n}
\frac{1}{\sM_n} 
\epsilon_{A,m}(\Phi_n,D_{\Phi_n}) 
=0
 \Label{ER1}.
\end{align}
\hfill $\square$
\end{lem}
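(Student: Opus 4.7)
The plan is to exploit the symmetry of the random codebook and decompose the verifiable-failure event into two easily controlled pieces. Since $\Phi_n(1), \ldots, \Phi_n(\sM_n)$ are i.i.d.\ drawn from $P^n$, the expectation $\rE_{\Phi_n}\epsilon_{A,m}(\Phi_n, D_{\Phi_n})$ does not depend on $m$, so it suffices to bound the $m=1$ summand. Set $X^n := \Phi_n(1)$, let $Y^n$ be the channel output, and define $N'(y^n) := |\{j \ne 1 : y^n \in {\cal D}_{\Phi_n(j)}\}|$. By Definition \ref{Def1}, the decoder can fail to include $m=1$ in its list only through one of two events: (a) $\{Y^n \notin {\cal D}_{X^n}\}$, where the threshold test itself fails, or (b) $\{N'(Y^n) \ge \sL_n\}$, where so many other codewords pass the threshold test that truncation to $\sL_n$ indices may eliminate $m=1$.

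For (a), averaging over $\Phi_n(1) \sim P^n$ and $Y^n \sim W^n_{\Phi_n(1)}$ gives
\begin{align*}
\rE_{\Phi_n} \Pr\!\big(Y^n \notin {\cal D}_{X^n} \mid X^n = \Phi_n(1)\big)
= \Pr\!\Big(\tfrac{1}{n}\sum_{i=1}^{n} \log \tfrac{W_{X_i}(Y_i)}{W_P(Y_i)} < R_3\Big),
\end{align*}
with $(X_i,Y_i)$ i.i.d.\ $\sim P \times W$. Each log-ratio summand has mean $I(P,W)$ and, under \eqref{OfIm}, finite variance, so Chebyshev's inequality drives this probability to $0$ because $R_3 < I(P,W)$.

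For (b), the codewords $\Phi_n(2), \ldots, \Phi_n(\sM_n)$ are independent of $(X^n, Y^n)$, so conditionally on $Y^n = y^n$ the count $N'(y^n)$ is a sum of $\sM_n - 1$ i.i.d.\ Bernoulli$(q(y^n))$ trials. A change-of-measure estimate via Markov's inequality applied to the nonnegative random variable $W^n_X(y^n)$ with $X \sim P^n$ gives $q(y^n) \le 1/\sM_n'$ uniformly in $y^n$, since $\rE_{X \sim P^n} W^n_X(y^n) = W^n_{P^n}(y^n)$. Hence $\rE[N'(y^n)] \le \sM_n/\sM_n' \le 2^{n(R_1 - R_3)}$, and a second Markov application yields
\begin{align*}
\rE_{\Phi_n} \Pr\!\big(N'(Y^n) \ge \sL_n\big) \le 2^{n(R_1 - R_2 - R_3)},
\end{align*}
which vanishes as soon as $R_3 > R_1 - R_2$ strictly. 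The lemma's weak hypothesis $R_3 \ge R_1 - R_2$ is therefore used through this strict version, which is automatic when the lemma is invoked inside the proof of Theorem \ref{TH3}: the gap $I(P,W) > R_1 - R_2$ in \eqref{NHO} lets one pick $R_3$ strictly inside $(R_1 - R_2, I(P,W))$. Adding the bounds for (a) and (b) yields \eqref{ER1}; the only mildly non-routine step is phrasing event (b) correctly via the auxiliary count $N'$, after which both estimates reduce to textbook random-coding arguments with a likelihood-ratio threshold.
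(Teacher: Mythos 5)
Your proof is correct and follows essentially the same route as the paper: both decompose the list-decoding failure into a threshold-miss event (controlled by the law of large numbers using $I(P,W)>R_3$ and the finite-variance condition \eqref{OfIm}) and an overflow event $N'(Y^n)\ge\sL_n$ (controlled by a change of measure giving $q(y^n)\le 1/\sM_n'$ followed by Markov), which is precisely the content of the paper's Lemmas~\ref{NMU} and~\ref{FHU} combined with the estimate at \eqref{ER1B}. Your remark that the bound $2^{n(R_1-R_2-R_3)}$ requires the strict inequality $R_3>R_1-R_2$ (which does hold when the lemma is invoked with $R_3=I(P,W)-\epsilon_1$ in the proof of Theorem~\ref{TH3}) is a correct and worthwhile observation about the hypothesis as stated in \eqref{C4}.
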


\begin{lem}\Label{LL11}
When 
\begin{align}
H(P)&>R_1,\Label{CG}
\end{align}
for every real number $\epsilon_2>0$, we have
\begin{align}
\lim_{n \to \infty}
\rE_{\Phi_n} 
\sum_{m=1}^{\sM_n}
\frac{1}{\sM_n} 
\eta_{\Phi_n,\epsilon_2,R_3}^C(m) =0
\Label{ER3}.
\end{align}
\hfill $\square$
\end{lem}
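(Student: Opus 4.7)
I would prove the lemma by random coding with $\Phi_n(1),\ldots,\Phi_n(\sM_n)$ drawn i.i.d.\ from $P^n$ on $\cX^n$. By linearity of expectation and exchangeability of this ensemble, the target expectation collapses to $\Pr_{\Phi_n}[\eta^C_{\Phi_n,\epsilon_2,R_3}(1)=1]$, so it suffices to bound the probability that the single fixed message $m=1$ is bad. Applying the union bound to the existential quantifier over $j$ in the definition of $\eta^C$ then gives
\[
\Pr[\eta^C(1)=1] \;\le\; (\sM_n-1)\,p_n,\qquad p_n := \Pr_{X^n,X'^n\sim P^n\text{ i.i.d.}}\!\Big[F^n(X^n,X'^n|P)\ge n(R_3-\epsilon_2)\Big].
\]

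\textbf{Tail bound via method of types.} I would next control $p_n$ by the method of types. Since $F^n(x^n,x'^n|P)/n=\sum_{a,a'}Q(a,a')F(a,a'|P)$ when $(x^n,x'^n)$ has joint empirical distribution $Q$ on $\cX\times\cX$, and the standard type bound yields $\Pr[\text{joint type}=Q]\le 2^{-nD(Q\|P\times P)}$ for $P^n\otimes P^n$, one obtains
\[
p_n \;\le\; (n+1)^{|\cX|^2}\,2^{-nE_n},\qquad E_n := \min\!\Big\{D(Q\|P\times P) \,:\, \sum_{a,a'}Q(a,a')F(a,a'|P)\ge R_3-\epsilon_2\Big\}.
\]
Combining these two steps yields $\Pr[\eta^C(1)=1]\le 2^{nR_1}(n+1)^{|\cX|^2}2^{-nE_n}$, so the whole argument reduces to the single quantitative estimate $E_n>R_1$.

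\textbf{Key estimate; the main obstacle.} The remaining task, and the principal difficulty, is to establish $E_n>R_1$ under the sole assumption $H(P)>R_1$. Using the decomposition $D(Q\|P\times P)=I(X;X')_Q+D(Q_X\|P)+D(Q_{X'}\|P)$, the natural extremal candidate is the diagonal type $Q_{\mathrm{diag}}(x,x')=P(x)\delta_{x,x'}$, for which the constraint is met with value $I(P,W)>R_3-\epsilon_2$ and $D(Q_{\mathrm{diag}}\|P\times P)=H(P)$; thus the hypothesis $H(P)>R_1$ provides the needed gap at the diagonal. The subtlety is that convex combinations of $P\times P$ and $Q_{\mathrm{diag}}$ (interpolating types) can have strictly smaller KL divergence from $P\times P$ while still being feasible, so one must work with the Cram\'er/Legendre representation
\[
E_n=\sup_{s\ge 0}\Big[s(R_3-\epsilon_2)-\log \rE_{X,X'\sim P}\big[e^{s F(X,X'|P)}\big]\Big]
\]
and exploit the flexibility, available in the setup of Theorem~\ref{TH3}, of choosing $R_3\in(R_1-R_2,I(P,W))$ close enough to $I(P,W)$ so that the optimizing $Q^\ast$ is forced close to $Q_{\mathrm{diag}}$ and the Legendre supremum exceeds $R_1$. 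This quantitative control of how far the minimizer can stray from the diagonal before the KL cost drops below $R_1$ is the main technical step I expect to require care; once it is established, Markov's inequality converts the bound on $\Pr[\eta^C(1)=1]$ into the claimed convergence of the expected average of $\eta^C$.
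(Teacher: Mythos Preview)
Your plan mirrors the paper's proof through the union bound over $j$, but diverges at a decisive point: you average the tail probability over the \emph{pair} $(\Phi_n(1),\Phi_n(j))$, so your large-deviation exponent is
\[
E(a)=\sup_{s\ge 0}\bigl[sa-\Lambda(s)\bigr],\qquad \Lambda(s):=\log \rE_{X,X'\sim P}\bigl[2^{\,sF(X,X'|P)}\bigr],\quad a:=R_3-\epsilon_2.
\]
The paper instead conditions on $\Phi_n(i)$ \emph{first}, applies the Chernoff bound to $\Phi_n(j)$ alone (yielding the conditional log-MGF $G^n(s,\Phi_n(i)|P)$), and only afterwards handles the randomness in $\Phi_n(i)$ via the law of large numbers. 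This replaces $\Lambda(s)$ by $G(s|P)=\rE_X\!\bigl[\log\rE_{X'}[2^{\,sF(X,X'|P)}]\bigr]$; by Jensen $G(s|P)\le\Lambda(s)$, so the paper's exponent $\sup_s[sa-G(s|P)]$ dominates yours. The gap is essential: Lemma~\ref{L55} shows the paper's exponent approaches $H(P)$ as $a\uparrow I(P,W)$, so the hypothesis $H(P)>R_1$ is exactly what closes the argument. Your exponent, however, approaches only $E(I(P,W))=\min\{D(Q\|P\times P):\rE_Q[F]\ge I(P,W)\}$, and the minimiser here is the tilted law $Q^\ast\propto P(x)P(x')\,2^{s^\ast F(x,x')}$ for a finite $s^\ast$, not $Q_{\mathrm{diag}}$, so in general $E(I(P,W))<H(P)$.

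Consequently your stated expectation---that choosing $R_3$ close to $I(P,W)$ forces $Q^\ast$ close to $Q_{\mathrm{diag}}$---is false. For a concrete failure, take $P$ with $D(W_x\|W_P)$ independent of $x$ (e.g.\ $P$ capacity-achieving): feasibility at $a=I(P,W)$ then forces $Q$ onto the diagonal, and minimising $D(Q\|P\times P)$ over diagonal laws gives $-\log\sum_x P(x)^2=H_2(P)$, the R\'enyi-$2$ entropy. Whenever $H_2(P)<R_1<H(P)$ (possible for any non-uniform such $P$), your bound $\sM_n\,p_n$ diverges and your argument cannot conclude, even though the lemma is true. The repair is precisely the paper's two-stage device: condition on $\Phi_n(i)$, bound the conditional probability on the ``good'' event $A_{n,i}=\{G^n(s,\Phi_n(i)|P)<n(-R_1+sa-\epsilon_5)\}$, and let the weak law send $\Pr(A_{n,i}^c)\to 0$---crucially, this last probability is never multiplied by $\sM_n$.
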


\noindent {\bf Step 2}: 
Extraction of an encoder $\phi_n$ and messages $m$ with small decoding error probability 
that satisfies the conditions \eqref{CC1} and \eqref{CC2}.

\noindent In this proof, we set the parameters 
$\epsilon_0$, $\epsilon_1$, $\epsilon_2$, $\epsilon_3$, and $R_3$ in the following way so that
the conditions in Lemmas \ref{LL12}, \ref{LL10}, and \ref{LL11} are satisfied.
\begin{align}
\epsilon_0&:= H(P)-R_1 \\
\epsilon_1&:=\frac{I(P,W)-R_1+R_2}{2}\\
R_3&:= I(P,W)-\epsilon_1 \\
\epsilon_2&:=\big(1+ \frac{2\zeta_2(P)}{\zeta_1(P)} \big) \epsilon_1
=\epsilon_1+ \frac{\zeta_2(P)}{\zeta_1(P)}  (I(P,W)- R_3+\epsilon_1) \\ 
\epsilon_3 &:=\min(\epsilon_1,\frac{R_1-I(P,W)}{3}).
\end{align}
The law of large number guarantees that
\begin{align}
&\lim_{n \to \infty}
\rE_{\Phi_n} 
\sum_{m=1}^{\sM_n}
\frac{1}{\sM_n} 
\eta_{\Phi_n,\epsilon_3}^A(m) =0.
\Label{ER4}  
\end{align}
\if0
where ${\rm Pr}_{X} ( A )$ expresses the probability that the condition $A$ 
with respect to the variable $X$ holds.
Eq. \eqref{MMS-1} implies
\begin{align}
&\lim_{n \to \infty}
{\rm Pr}_{\Phi_n} \Big( \# \big\{ m \big| D( W_{\Phi_n(m)}^{n}\| W_{P^n}^{n}  )
< n (I(P,W)+\epsilon_3) \big\} \ge (1-\epsilon_3)\sM_n \Big)
= 0.\Label{MMS}
\end{align}
\fi

Due to Eq. \eqref{ER4}, 
and Lemmas \ref{LL10} and \ref{LL11},
there exist a sequence of codes $\phi_n$ and 
a sequence of real numbers $\epsilon_{4,n}>0$
such that
$\epsilon_{4,n} \to 0$ and 
\begin{align}
\sum_{m=1}^{\sM_n}
\frac{1}{\sM_n} 
\Big(\epsilon_{A,m}(\phi_n,D_{\phi_n}) 
+\eta_{\phi_n,\epsilon_3}^A(m) 
+\eta_{\phi_n,\epsilon_2,R_3}^C(m) 
\Big)
& \le \frac{\epsilon_{4,n}}{3} \Label{BGC}. 
\end{align}
Due to Eq. \eqref{BGC}, Markov inequality guarantees that
there exist $2\sM_n/3$ elements 
$\tilde{{\cal M}}_n:=
\{m_1, \ldots, m_{2 \sM_n/3}\}$
such that every element $m \in \tilde{{\cal M}}_n$ satisfies
\begin{align}
\epsilon_{A,m}(\phi_n,D_{\phi_n}) 
+\eta_{\phi_n,\epsilon_3}^A(m) 
+\eta_{\phi_n,\epsilon_2,R_3}^C(m) 
\le \epsilon_{4,n},
\end{align}
which implies that
\begin{align}
\epsilon_{A,m}(\phi_n,D_{\phi_n}) &\le \epsilon_{4,n}  \\
\eta_{\phi_n,\epsilon_3}^A(m) 
&=\eta_{\phi_n,\epsilon_2,R_3}^C(m) =0
\end{align}
because $\eta_{\phi_n,\epsilon_3}^A$ and $\eta_{\phi_n,\epsilon_2,R_3}^C$ take value 0 or 1.
\if0
Since $1/6 \ge \epsilon_3$, Eq. \eqref{MGA} guarantees that 
the number of $m$'s that do not satisfy the following condition \eqref{MUF} is at most $\sM_n/6$;
\begin{align}
D( W_{\phi_n(m)}^{n}\| W_{P^n}^{n}  )
< n (I(P,W)+\epsilon_3) .\Label{MUF}
\end{align}
Hence, we can choose $\sM_n/2$ elements 
$\tilde{{\cal M}}_n:=
\{m_1, \ldots, m_{ \sM_n/2}\}$ from $\tilde{{\cal M}}_n'$ to satisfy the condition \eqref{MUF}
because $ | \tilde{{\cal M}}_n'|-|\tilde{{\cal M}}_n|=\sM_n/6$.
\fi

\noindent {\bf Step 3}: Proof of Non-cheating condition for dishonest Alice (D).

\noindent 
Now, we define a code $\tilde{\phi}_n$
on $\tilde{{\cal M}}_n$
as $\tilde{\phi}_n(m):= {\phi}_n(m)$ for $m \in \tilde{{\cal M}}_n $.
Thus, for $ m \neq m'\in \tilde{{\cal M}}_n$,
we have
\begin{align}
\epsilon_{A,m}(\tilde{\phi}_n,D_{\tilde{\phi}_n}) &\le
\epsilon_{A,m}(\tilde{\phi}_n,D_{\phi_n}) \le \epsilon_{4,n} \Label{EF2}\\
F^n(\tilde{\phi}_n(m),\tilde{\phi}_n(m')|P) &< n (R_3-\epsilon_2) \\
F^n(\tilde{\phi}_n(m),\tilde{\phi}_n(m)|P) &< n (I(P,W)+\epsilon_3) \le n (I(P,W)+\epsilon_1)  .
\Label{MUFV}
\end{align}
Therefore, Lemma \ref{LL12} guarantees 
Non-cheating condition for dishonest Alice (D), i.e., 
\begin{align}
&\delta_{D,m}(D_{\tilde{\phi}_n}) 
\le
 \frac{V(W)}{ n (\epsilon_1
- \frac{\zeta_2(P)\sqrt{2 V}}{\zeta_1(P)\sqrt{ n}} 
) ^2}.\Label{EF1}
\end{align}

\noindent {\bf Step 4}: Proof of Non-decodable condition (B).

\noindent 
Eq. \eqref{MUFV} can be rewritten as
\begin{align}
D( W_{\phi_n(m)}^{n}\| W_{P^n}^{n}  )
< n (I(P,W)+\epsilon_3) .\Label{MUF}
\end{align}
In the code $\tilde{\phi}_n$, Eq. \eqref{MUF} implies that
\begin{align}
& I(X^n;Y^n) 
=
\sum_{m\in \tilde{{\cal M}}_n}
P_{\tilde{{\cal M}}_n}(m)
D( W_{\phi_n(m)}^{n}\| W_{P_{\tilde{{\cal M}}_n}}^{n})\nonumber \\
\le &
\Big(\sum_{m\in \tilde{{\cal M}}_n}
P_{\tilde{{\cal M}}_n}(m)
D( W_{\phi_n(m)}^{n}\| W_{P_{\tilde{{\cal M}}_n}}^{n})\Big)
+
D( W_{P_{\tilde{{\cal M}}_n}}^{n}\| W_{P^n}^{n}  ) \nonumber \\
=&
\sum_{m\in \tilde{{\cal M}}_n} P_{\tilde{{\cal M}}_n}(m)
D( W_{\phi_n(m)}^{n}\| W_{P^n}^{n}  )
\le n  (I(P,W)+\epsilon_3),
\Label{MK6}
\end{align}
where $P_{\tilde{{\cal M}}_n}$ is the uniform distribution on $\tilde{{\cal M}}_n$.
Using the formula given in \cite[Theorem 4]{Verdu}\cite[Lemma 4]{HN},
we have
\begin{align}
\delta_{B}(\tilde{\phi}_n)\le& 
\bigg(\sum_{m \in \tilde{{\cal M}}_n} \frac{2}{\sM_n}
W^n_{\tilde{\phi}_n(m)}
\big(\big\{y^n \big| \log W^n_{\tilde{\phi}_n(m)}(y^n)  -\log W^n_{P^n}(y^n) \ge n (I(P,W)+2\epsilon_3)
\big\}\big) \bigg)\nonumber \\
&+\frac{2 \cdot 2^{n(I(P,W)+2\epsilon_3)}}{\sM_n}.\Label{BU2}
\end{align}
Also, we have
\begin{align}
& W^n_{\tilde{\phi}_n(m)}
\Big(\Big\{y^n\Big| \log W^n_{\tilde{\phi}_n(m)}(y^n)  -\log W^n_{P^n}(y^n) \ge 
n (I(P,W)+2\epsilon_3) \Big\}\Big) \nonumber \\
\stackrel{(a)}{\le} &
W^n_{\tilde{\phi}_n(m)}\Big(\Big\{y^n\Big| 
\big|\log W^n_{\tilde{\phi}_n(m)}(y^n)  -\log W^n_{P^n}(y^n) 
-D( W_{\tilde{\phi}_n(m)}^{n}\| W_{P^n}^{n}  ) \big|
\ge  n \epsilon_3 \Big\} \Big)
\stackrel{(b)}{\le} 
\frac{V(W)}{n \epsilon_1^2},\Label{BU1}
\end{align}
where $(a)$ follows from Eq. \eqref{MUF} and $(b)$ follows from 
Chebychev inequality.
Since 
the relation 
\begin{align*}
&  I(P,W)+2\epsilon_3
\le I(P,W)+\frac{2}{3}(R_1-I(P,W)) \\
= & I(P,W)+(R_1-I(P,W))
-\frac{1}{3}(R_1-I(P,W))
= R_1-\frac{1}{3}(R_1-I(P,W))
\end{align*}
implies 
$\frac{2 \cdot 2^{n(I(P,W)+2\epsilon_3)}}{\sM_n} 
\le \frac{2 \cdot 2^{n(R_1-\frac{1}{3}(R_1-I(P,W)))}}{\sM_n} 
= 2 \cdot 2^{-\frac{n}{3}(R_1-I(P,W)))}
\to 0$, the combination of 
\eqref{BU2} and \eqref{BU1} implies that
$\delta_{B}(\tilde{\phi}_n)\to 0$.
Since RHSs of \eqref{EF2} and \eqref{EF1}
go to zero,
we obtain the desired statement.
\endproof

\subsection{Proof of Lemma \ref{LL12}}\Label{S7-C}
The following is the outline of the proof of Lemma \ref{LL12}.
To show Lemma \ref{LL12}, we need to evaluate 
$\delta_{D,m}(D_{\tilde{\phi}_n}) $.
For this evaluation, we evaluate 
$\delta_{C,m}(x^n, D_{\tilde{\phi}_n}) $
when an element $x^n \in {\cal X}^n$
satisfies the condition
\begin{align}
\epsilon_{A,m}(x^n, D_{\tilde{\phi}_n}) \le \frac{1}{2}.\Label{J11}
\end{align}
As Step 1, we evaluate $W^n_{x^n} ({\cal D}_{{x^n}'})$
by using Chebyshev inequality.
Then. in Step 2, we evaluate $F^n(x^n,\tilde{\phi}_n(m')|P)$.
Finally, Step 3 evaluate 
$\delta_{C,m}(x^n, D_{\tilde{\phi}_n}) $
under the condition \eqref{J11} by using the above evaluations.

\noindent {\bf Step 1}: 
Evaluation of $W^n_{x^n} ({\cal D}_{{x^n}'})$.

\noindent 
As the preparation, we evaluate $W^n_{x^n} ({\cal D}_{{x^n}'})$.
The definitions of $F^n(x^n,{x^n}'|P)$ (Eq. \eqref{I11}) 
and $V(W)$ (Eq. \eqref{OfIm})
imply that 
\begin{align}
\mathbb{E}_{x^n} (n R_3 -(\log W_{x^n} (Y^n) -\log W_{P^n}^{n}(Y^n) )) &=
n R_3-F^n(x^n,{x^n}'|P), \\
\mathbb{V}_{x^n} (n R_3 -(\log W_{x^n} (Y^n) -\log W_{P^n}^{n}(Y^n) )) &
\le n V(W).
\end{align}
Hence, applying Chebyshev inequality to 
the variable $n R_3 -(\log W_{x^n} (Y^n) -\log W_{P^n}^{n}(Y^n) )$, we have
\begin{align}
W^n_{x^n} ({\cal D}_{{x^n}'})
&=
W^n_{x^n} (\{ y^n| W_{x^n} (y^n) \ge \sM_n' W_{P^n}^{n}(y^n) \})
\nonumber \\
&=
W^n_{x^n} (\{ y^n|  n R_3 -(\log W_{x^n} (y^n) -\log W_{P^n}^{n}(y^n) )  \le 0  \})
\nonumber \\
&\le 
\frac{n V(W)}{[  n R_3-F^n(x^n,{x^n}'|P) ]_+^2  }\Label{Che},
\end{align}
where $[x]_+:=\max(x,0)$.

\noindent {\bf Step 2}: Evaluation of $F^n(x^n,\tilde{\phi}_n(m')|P)$.

\noindent 
In this step, we evaluate $F^n(x^n,\tilde{\phi}_n(m')|P)$.
In the following, we assume \eqref{J11}. 
Then,
\begin{align}
\frac{1}{2}\le (1-\epsilon_{A,m}(x^n, D_{\tilde{\phi}_n}) )
\stackrel{(a)}{\le} W^n_{x^n} ({\cal D}_{\tilde{\phi}_n(m)}) 
\stackrel{(b)}{\le} \frac{nV(W)}{ [n R_3-F^n(x^n,\tilde{\phi}_n(m)|P)]_+^2},
\end{align}
where $(a)$ follows from the definition of $D_{\tilde{\phi}_n}$ 
(Definition \ref{Def1})
and $(b)$ follows from Eq. \eqref{Che}.

Hence, we have 
\begin{align}
n R_3-F^n(x^n,\tilde{\phi}_n(m)|P)
\le \sqrt{2n V(W)}.\Label{MKO}
\end{align}
Thus,
\begin{align}
&n R_3-\sqrt{2n V(W)}
\stackrel{(a)}{\le}
F^n(x^n,\tilde{\phi}_n(m)|P) \nonumber \\
\stackrel{(b)}{\le} &
F^n(\tilde{\phi}_n(m),\tilde{\phi}_n(m)|P) 
- d(x^n,\tilde{\phi}_n(m))\zeta_1(P) \nonumber \\
\stackrel{(c)}{\le} &
n (I(P,W)+\epsilon_1)
- d(x^n,\tilde{\phi}_n(m))\zeta_1(P),
\end{align}
where $(a)$, $(b)$, and $(c)$ follow from Eq. \eqref{MKO}, 
the combination of the definitions of 
$\zeta_1(P)$ and $d(x^n,\tilde{\phi}_n(m))$,
and Eq. \eqref{E42},
respectively.

Hence,
\begin{align}
d(x^n,\tilde{\phi}_n(m)) \le
\frac{n (I(P,W)- R_3+\epsilon_1)+\sqrt{2n V(W)}}{\zeta_1(P)}.
\Label{E43}
\end{align}
Thus, for $m' \in \tilde{{\cal M}}_n$, we have
\begin{align}
& n R_3-F^n(x^n,\tilde{\phi}_n(m')|P)
\stackrel{(a)}{\ge}
n R_3-F^n(\tilde{\phi}_n(m),\tilde{\phi}_n(m')|P)
-\zeta_2(P) d(x^n,\tilde{\phi}_n(m)) \nonumber \\
\stackrel{(b)}{\ge}
 & n \epsilon_2
-\zeta_2(P) d(x^n,\tilde{\phi}_n(m))\nonumber  \\
\stackrel{(c)}{\ge}
 & n \epsilon_2
- \frac{\zeta_2(P)}{\zeta_1(P)}  (n (I(P,W)- R_3+\epsilon_1)+\sqrt{2n V(W)})
\nonumber \\
\stackrel{(d)}{=}
 & n \epsilon_1
- \frac{\zeta_2(P)}{\zeta_1(P)} \sqrt{2n V(W)},\Label{EE73}
\end{align}
where $(a)$, $(b)$, $(c)$ and $(d)$ follow from 
the combination of the definitions of 
$\zeta_1(P)$ and $d(x^n,\tilde{\phi}_n(m))$,
Eq. \eqref{E41}, 
Eq. \eqref{E43},
the choice of $\epsilon_2$
respectively.
Eq. \eqref{EE73} is the required evaluation of
$F^n(x^n,\tilde{\phi}_n(m')|P)$.

\noindent {\bf Step 3}: Evaluation of $\delta_{C,m}(x^n,D_{\tilde{\phi}_n}) $.

\noindent 
Finally, combining Eqs. \eqref{Che} and \eqref{EE73}, we evaluate $\delta_{C,m}(x^n,D_{\tilde{\phi}_n}) $.
We have
\begin{align}
&\delta_{C,m}(x^n,D_{\tilde{\phi}_n}) 
\stackrel{(a)}{=}
 \max_{m'\neq m}W^n_{x^n} 
({\cal D}_{\tilde{\phi}_n(m')}) \nonumber \\
\stackrel{(b)}{\le}
 & \max_{m'\neq m} \frac{nV(W)}{ 
(n R_3-F^n(x^n,\tilde{\phi}_n(m')|P))^2}
\stackrel{(c)}{\le}
 \frac{nV(W)}{ (n \epsilon_1
- \frac{\zeta_2(P)}{\zeta_1(P)} \sqrt{2n V(W)})^2}
= \frac{V(W)}{ n (\epsilon_1
- \frac{\zeta_2(P)\sqrt{2 V(W)}}{\zeta_1(P)\sqrt{ n}} 
) ^2},
\end{align}
where $(a)$, $(b)$, and $(c)$ follow from 
the definition of $\delta_{C,m}$ (Eq. \eqref{EE8}),
Eq. \eqref{Che}, and
Eq. \eqref{EE73}, 
respectively.
Taking the maximum of $\delta_{C,m}(x^n,D_{\tilde{\phi}_n})$ under the condition \eqref{J11},
we obtain the required inequality.

\endproof

\subsection{Proof of Lemma \ref{LL10}}\Label{S7-D}
We show Lemma \ref{LL10} by employing an idea similar to \cite{Verdu,HN}. 
First, we show the following lemma.
\begin{lem}\Label{NMU}
We have the following inequality;
\begin{align}
\epsilon_A(\Phi_n,D_{\Phi_n})
\le 
\sum_{i=1}^{\sM_n} \frac{1}{\sM_n}
\Big(W_{\Phi_n(i)}({\cal D}_{\Phi_n(i)}^c)
+
\sum_{j\neq i } \frac{1}{\sL_n}
W_{\Phi_n(i)}({\cal D}_{\Phi_n(j)}) \Big)\Label{NML}.
\end{align}
\hfill $\square$
\end{lem}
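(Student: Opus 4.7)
The plan is to decompose the error event and apply Markov's inequality to the overflow in the list. Recall that for a given $y^n$, the decoder $D_{\Phi_n}$ considers the candidate set $S(y^n) := \{j : y^n \in {\cal D}_{\Phi_n(j)}\}$ and outputs either all of $S(y^n)$ (when $|S(y^n)| \le \sL_n$) or some $\sL_n$-element subset (when $|S(y^n)| > \sL_n$). Hence message $i$ fails to be decoded only in one of two ways: either (i) $y^n \notin {\cal D}_{\Phi_n(i)}$, so $i \notin S(y^n)$, or (ii) $y^n \in {\cal D}_{\Phi_n(i)}$ but $i$ is excluded from the chosen $\sL_n$-subset, which forces $|S(y^n)\setminus\{i\}| \ge \sL_n$.

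For case (ii), regardless of the tie-breaking rule, the indicator of its occurrence is bounded by $\mathbf{1}\{|\{j \neq i : y^n \in {\cal D}_{\Phi_n(j)}\}| \ge \sL_n\}$, and Markov's inequality applied to the nonnegative integer count yields
\begin{align}
\mathbf{1}\{|\{j \neq i : y^n \in {\cal D}_{\Phi_n(j)}\}| \ge \sL_n\}
\le \frac{1}{\sL_n}\sum_{j \neq i} \mathbf{1}\{y^n \in {\cal D}_{\Phi_n(j)}\}.
\end{align}
Combining the two cases via union bound and integrating against $W_{\Phi_n(i)}$ gives
\begin{align}
\epsilon_{A,i}(\Phi_n, D_{\Phi_n})
\le W_{\Phi_n(i)}({\cal D}_{\Phi_n(i)}^c) + \frac{1}{\sL_n}\sum_{j \neq i} W_{\Phi_n(i)}({\cal D}_{\Phi_n(j)}).
\end{align}

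Averaging over $i \in \{1,\dots,\sM_n\}$ with weight $1/\sM_n$ then produces exactly the right-hand side of \eqref{NML}, which upper bounds the (average version of the) quantity $\epsilon_A(\Phi_n, D_{\Phi_n})$ used throughout the direct-part argument. There is essentially no obstacle here; the only subtle point is simply recognizing that the tie-breaking in the decoder can be dominated deterministically by the event that at least $\sL_n$ wrong codewords compete for the list slot, which is precisely what Markov's inequality is tailored to control.
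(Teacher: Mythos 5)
Your proof is correct and follows essentially the same approach as the paper's: you decompose the error for message $i$ into the same two cases (either $y^n \notin {\cal D}_{\Phi_n(i)}$, or at least $\sL_n$ wrong codewords $j\neq i$ also have $y^n\in{\cal D}_{\Phi_n(j)}$), and you derive the second term by the same Markov-type counting bound that the paper leaves implicit. Your parenthetical noting that the left-hand side should be read as the average version of $\epsilon_A$ is also the correct reading, since the right-hand side is an average over $i$ and the bound would not hold for the maximum in general.
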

\begin{proof}
When $i$ is sent, 
there are two cases for incorrect decoding.
The first case is the case that the received element $y$ does not belong to ${\cal D}_{\Phi_n(i)}$.
The second case is the case that there are more than $\sL_n$ elements $i'$ 
to satisfy $y \in {\cal D}_{\Phi_n(i')}$.
The error probability of the first case is given in the first term of Eq. \eqref{NML}.
The error probability of the second case is given in the second term of Eq. \eqref{NML}.
\end{proof}

Taking the average in \eqref{NML} of Lemma \ref{NMU}
with respect to the variable $\Phi_n$, we obtain the following lemma.
\begin{lem}\Label{FHU}
We have the following inequality;
\begin{align}
\rE_{\Phi_n}\epsilon_A(\Phi_n,D_{\Phi_n})
\le 
\sum_{x^n\in {\cal X}^n}P^n(x^n) 
\Big(W_{x^n}^n({\cal D}_{x_n}^c)
+
\frac{\sM_n-1}{\sL_n}
W_{P^n}^n({\cal D}_{x^n})\Big).
\Label{NML2}
\end{align}
\hfill $\square$
\end{lem}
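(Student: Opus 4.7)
The plan is to take the codebook-expectation of the union-bound inequality from Lemma \ref{NMU} term by term. The key structural observation I would rely on is that, by Definition \ref{Def1}, the decoding region $\cD_{x^n}$ depends only on the single codeword $x^n$ (and on the fixed quantities $\sM_n'$ and $W_{P^n}^n$); it does not involve the rest of the codebook. Consequently, for distinct indices $i \neq j$, the pair $(\Phi_n(i), \Phi_n(j))$ is independent under the random-coding distribution, and each $W_{\Phi_n(i)}^n(\cD_{\Phi_n(j)})$ factorizes cleanly.

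For the diagonal term, since $\Phi_n(i) \sim P^n$, one immediately gets
\begin{align}
\rE_{\Phi_n} W_{\Phi_n(i)}^n(\cD_{\Phi_n(i)}^c)
= \sum_{x^n} P^n(x^n)\, W_{x^n}^n(\cD_{x^n}^c),
\end{align}
which is independent of $i$. For each cross term with $j \neq i$, using the independence of $\Phi_n(i)$ and $\Phi_n(j)$, I would compute
\begin{align}
\rE_{\Phi_n} W_{\Phi_n(i)}^n(\cD_{\Phi_n(j)})
&= \sum_{x^n,\tilde{x}^n} P^n(x^n)\, P^n(\tilde{x}^n)\, W_{x^n}^n(\cD_{\tilde{x}^n}) \nonumber \\
&= \sum_{\tilde{x}^n} P^n(\tilde{x}^n)\, W_{P^n}^n(\cD_{\tilde{x}^n}),
\end{align}
invoking the identity $W_{P^n}^n = \sum_{x^n} P^n(x^n) W_{x^n}^n$ from Section \ref{S2-1B}. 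Summing over the $\sM_n - 1$ values of $j \neq i$ and then averaging over $i \in \{1, \ldots, \sM_n\}$ (the average is trivial since the result is $i$-free) produces the factor $(\sM_n - 1)/\sL_n$ in front of the second term. Relabeling $\tilde{x}^n$ as $x^n$ then yields the claimed bound.

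The argument is essentially mechanical, and I do not expect any serious obstacle. The only conceptual point requiring care is recognizing that $\cD_{x^n}$ is a codebook-independent set; without that, the cross terms would couple the codeword indices and the simple factorization that produces the output distribution $W_{P^n}^n$ in the second summand would break down.
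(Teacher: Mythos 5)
Your proposal is correct and takes the same approach the paper intends: the paper dispatches Lemma~\ref{FHU} with the single remark ``Taking the average in \eqref{NML} of Lemma~\ref{NMU} with respect to the variable $\Phi_n$,'' and your argument is exactly the detailed version of that averaging, correctly identifying both the independence of distinct codewords $\Phi_n(i)$, $\Phi_n(j)$ under the random-coding distribution and the codebook-independence of the set $\cD_{x^n}$ as the facts that make the cross-term expectation collapse to $\sum_{\tilde{x}^n} P^n(\tilde{x}^n) W_{P^n}^n(\cD_{\tilde{x}^n})$ and yield the $(\sM_n-1)/\sL_n$ factor.
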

Applying Lemma \ref{FHU}, 
we have
\begin{align}
& \rE_{\Phi_n} \epsilon_A(\Phi_n,D_{\Phi_n})\nonumber  \\
\le &
\rE_{X^n}W^n_{X^n} \big(\big\{y^n\big| 
2^{-n R_3} W^n_{X^n}(y^n)   < W^n_{P^n}(y^n) \big\} \big)\nonumber \\
&+
\rE_{X^n}2^{n (R_1-R_2)}W^n_{P^n}
\big(\big\{y^n\big|
2^{-n R_3} W^n_{X^n}(y^n)  \ge W^n_{P^n}(y^n)
\big\}\big)
\nonumber \\
\stackrel{(a)}{\le}&
\rE_{X^n}W^n_{X^n} \big(\big\{y^n\big| 
\log W^n_{X^n}(y^n)  -\log W^n_{P^n}(y^n) < n R_3\big\} \big)\nonumber \\
&+
\rE_{X^n}2^{n (R_1-R_2)}2^{-n R_3}W^n_{X^n}
\big(\big\{y^n\big| 
2^{-n R_3} W^n_{X^n}(y^n)  \ge W^n_{P^n}(y^n)\big\}\big)
\nonumber \\
\le &
\rE_{X^n}W^n_{X^n} \bigg(\bigg\{y^n\bigg| 
\frac{1}{n} (\log W^n_{X^n}(y^n)  -\log W^n_{P^n}(y^n)) <  R_3\bigg\} \bigg)
+2^{n (R_1-R_2-R_3)}
 \Label{ER1B},
\end{align}
where $(a)$ follows from 
the relation 
$$W^n_{P^n}
\big(\big\{y^n\big|
2^{-n R_3} W^n_{X^n}(y^n)  \ge W^n_{P^n}(y^n)
\big\}\big)
\le
2^{-n R_3}W^n_{X^n}
\big(\big\{y^n\big| 
2^{-n R_3} W^n_{X^n}(y^n)  \ge W^n_{P^n}(y^n)\big\}\big).$$
The variable $\frac{1}{n} (\log W^n_{X^n}(y^n)  -\log W^n_{P^n}(y^n))$ 
is the mean of $n$ independent variables 
that are identical to the variable $\log W_X(Y)-\log W_P(Y)$ whose
average is $I(P,W)> R_3$.
Thus, the law of large number guarantees that the first term in 
\eqref{ER1B} approaches to zero as $n$ goes to infinity.
The second term in 
\eqref{ER1B} also approaches to zero due to 
the assumption \eqref{C4}.
Therefore, we obtain Eq. \eqref{ER1}. 
\endproof

\subsection{Proof of Lemma \ref{LL11}}\Label{S7-E}
The outline of the proof of Lemma \ref{LL11} is the following.
To evaluate the value
$\rE_{\Phi_n} 
\sum_{m=1}^{\sM_n}
\frac{1}{\sM_n} 
\eta_{\Phi_n,\epsilon_2,R_3}^C(m)$, we convert it to the sum of certain probabilities.
We evaluate these probabilities by excluding a certain exceptional case.
That is, we show that the probability of the exceptional case is small and 
these probabilities under the condition to exclude the exceptional case is also small.
The latter will be shown by evaluating a certain conditional probability.
For this aim, we introduce a new function $G(s|P)$ and prepare a new lemma.

\noindent {\bf Step 1}: Preparation.

\noindent 
For $s >0$, we define
\begin{align}
G(s,x|P)&:=
\log \sum_{x' \in {\cal X}}P(x')2^{s F(x,x'|P)} \\
G(s|P)&:=
\sum_{x \in {\cal X}}P(x)
G(s,x|P) \\
G^n(s,x^n|P)& :=\sum_{i=1}^n G(s,x^n_i|P).
\end{align}
Since the function $s \mapsto G(s,x|P)$ is strictly convex, the function $s \mapsto G(s|P)$ is strictly convex.
Hence, we have the following lemma.

\begin{lem}\Label{L55}
When $W_x\neq W_{x'}$ for $x\neq x'$, 
The value $s I(P,W) -G(s|P)-H(P)$ is negative and continuous for $s>0$. 
It converges to zero as $s$ goes to infinity.
Also,
$\sup_{s> 0}-R_1+sR -G(s|P) >0$
for $ R_1< H(P)$.
\hfill $\square$\end{lem}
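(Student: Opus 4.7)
The plan is to reduce $sI(P,W) - G(s|P) - H(P)$ to a single transparent expression via one algebraic step, and then read off every claim. I would substitute $F(x,x'|P)=D(W_x\|W_P)-D(W_x\|W_{x'})$ inside $G(s,x|P)=\log\sum_{x'}P(x')2^{sF(x,x'|P)}$, factor out the $x'$-independent factor $2^{sD(W_x\|W_P)}$, obtaining $sD(W_x\|W_P)-G(s,x|P)=-\log\sum_{x'}P(x')2^{-sD(W_x\|W_{x'})}$, then average over $P(x)$ and subtract $H(P)=-\sum_x P(x)\log P(x)$. This yields the core identity
\begin{align}
sI(P,W)-G(s|P)-H(P)=-\sum_x P(x)\log\frac{\sum_{x'}P(x')2^{-sD(W_x\|W_{x'})}}{P(x)}.
\end{align}

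From this identity the first two assertions fall out directly. The $x'=x$ term in the inner sum is $P(x)$, so the ratio is $\ge 1$ and each outer summand is $\le 0$, giving $sI(P,W)-G(s|P)-H(P)\le 0$; strict negativity for $s>0$ uses the standing assumption $W_x\neq W_{x'}$ for $x\neq x'$ (which forces $D(W_x\|W_{x'})>0$) together with $P$ charging at least two points, which is automatic whenever $H(P)>0$. Continuity in $s>0$ is immediate because every $2^{-sD}$ is continuous and the inner sum remains bounded below by $P(x)>0$. As $s\to\infty$, each $x'\neq x$ term decays exponentially, so the ratio tends to $1$ and its logarithm to $0$ termwise on the finite set $\cX$, giving the asserted limit.

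For the final inequality, reading the $sR$ in the statement as $sI(P,W)$ (the only interpretation consistent with what precedes it), I would decompose
\begin{align}
-R_1+sI(P,W)-G(s|P)=(H(P)-R_1)+(sI(P,W)-G(s|P)-H(P)).
\end{align}
The second parenthesis is $\le 0$ but tends to $0$ by the limit just proved, while $R_1<H(P)$ makes the first strictly positive, so the supremum is at least $H(P)-R_1>0$. The only mildly subtle point, and the main potential obstacle, is the strict negativity statement, which is where the non-degeneracy hypothesis $W_x\neq W_{x'}$ is essential; everything else is bookkeeping from the core identity above.
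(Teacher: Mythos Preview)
Your proposal is correct and follows essentially the same route as the paper. The paper derives the identical identity (written as $-\sum_x P(x)\log\big(1+\sum_{x'\neq x}\frac{P(x')}{P(x)}e^{-s(F(x,x|P)-F(x,x'|P))}\big)$, which coincides with yours because $F(x,x|P)-F(x,x'|P)=D(W_x\|W_{x'})$), reads off negativity and the $s\to\infty$ limit in the same way, and then obtains the final supremum claim by exactly your decomposition $(H(P)-R_1)+(sI(P,W)-G(s|P)-H(P))$; your reading of $sR$ as $sI(P,W)$ matches the paper's own usage in that step.
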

\begin{proof}
Since $ F(x,x|P)  >F(x,x'|P)$  for $x\neq x'$, we have
\begin{align}
G(s|P)= \sum_{x\in {\cal X}}P(x) 
\Big(\log P(x) +s F(x,x|P) +
\log \Big(1+ \sum_{x'\neq x} \frac{P(x')}{P(x)}e^{-s (F(x,x|P) -F(x,x'|P))}
\Big)\Big).
\end{align}
Hence, the relation $\sum_{x\in {\cal X}}P(x) F(x,x|P)= I(P,W)$ implies that
\begin{align}
s I(P,W) -G(s|P)-H(P)
=-\sum_{x\in {\cal X}}P(x)\log 
\Big(1+ \sum_{x'(\neq x)\in {\cal X}} \frac{P(x')}{P(x)}e^{-s (F(x,x|P) -F(x,x'|P))}
\Big)< 0.
\end{align}
When $s \to \infty$, the above value goes to zero.

Since $ -R_1> -H(P)$, we have
$\lim_{s\to 0}-R_1+sR  -G(s|P)
>\lim_{s\to 0}-H(P)+s I(P,W) -G(s|P)=0$.
Hence, due to the continuity for $s$,
there exists $s>0$ such that $-R_1+sR -G(s|P) >0$.
Then, the proof is completed.
\end{proof}
\noindent {\bf Step 2}: Evaluation of a certain conditional probability

\noindent 
Due to Lemma \ref{L55} and Eq. \eqref{CG},
we can choose $s>0$ such that $-R_1+s(R_3-\epsilon_2) >G(s|P)$.
We set $\epsilon_5 := \frac{1}{2}(-R_1+s(R_3-\epsilon_2)-G(s|P) )>0$.
We define two conditions
$A_{n,i}$ and $B_{n,i}$ for the encoder $\Phi_n$ as
\begin{description}
\item[$A_{n,i}$]
$G^n(s, \Phi_n(i)|P) < n (- R_1+s(R_3-\epsilon_2)-\epsilon_5)$.
\item[$B_{n,i}$]
$\exists j\neq i, F^n(\Phi_n(i),\Phi_n(j)|P) \ge n (R_3-\epsilon_2)$.
\end{description}
The aim of this step is the evaluation of the 
conditional probability $\Pr_{\Phi_n} (B_{n,i}|A_{n,i})$ that expresses
the probability that 
the condition $B_{n,i}$ holds under the condition $A_{n,i}$.

We choose $j \neq i$.
When the fixed variable $\Phi_n(i)$ satisfies the condition $A_{n,i}$,
Markov inequality implies that
$$
\Pr_{\Phi_n(j)|\Phi_n(i)} \Big( F^n(\Phi_n(i),\Phi_n(j)|P) \ge n (R_3-\epsilon_2)
\Big) \le 2^{G^n(s, \Phi_n(i)|P) - sn (R_3-\epsilon_2)},
$$
where $\Pr_{\Phi_n(j)|\Phi_n(i)}$ is the probability for the random variable $\Phi_n(j)$ with
the fixed variable $\Phi_n(i)$.
Hence, when the fixed variable $\Phi_n(i)$ satisfies the condition $A_{n,i}$,
\begin{align}
\Pr_{\Phi_{n,i,c} |\Phi_n(i)} (B_{n,i})
\le &
\sum_{j (\neq i)\in {\cal M}_n}
\Pr_{\Phi_n(j)|\Phi_n(i)} \Big( F^n(\Phi_n(i),\Phi_n(j)|P) \ge n (R_3-\epsilon_2)
\Big) \nonumber \\
\le &
(2^{n R_1}-1) 2^{G^n(s, \Phi_n(i)|P) - sn (R_3-\epsilon_2)}
\le
2^{G^n(s, \Phi_n(i)|P) - sn (R_3-\epsilon_2)+nR_1}
\le 2^{-n\epsilon_5},
\end{align}
where $ \Phi_{n,i,c}$ expresses the random variables $\{\Phi_n(j)\}_{j \neq i}$.
Then, we have
\begin{align}
\Pr_{\Phi_n} (B_{n,i}|A_{n,i}) \le 2^{-n\epsilon_5}.
\Label{N2}
\end{align}

\noindent {\bf Step 3}: Evaluation of 
$\rE_{\Phi_n} 
\sum_{m=1}^{\sM_n}
\frac{1}{\sM_n} 
\eta_{\Phi_n,\epsilon_2,R_3}^C(m)$.

\noindent 
The quantity $\rE_{\Phi_n} 
\sum_{m=1}^{\sM_n}
\frac{1}{\sM_n} 
\eta_{\Phi_n,\epsilon_2,R_3}^C(m)$ can be evaluated as
\begin{align}
&\rE_{\Phi_n} 
\sum_{m=1}^{\sM_n}
\frac{1}{\sM_n} 
\eta_{\Phi_n,\epsilon_2,R_3}^C(m) \nonumber \\
=& 
\frac{1}{\sM_n} 
\rE_{\Phi_n} 
|\{ i | B_{n,i} \hbox{ holds. }\}|
=
\sum_{i=1}^{\sM_n}
\frac{1}{\sM_n} \Pr_{\Phi_n}
(B_{n,i}
)\nonumber  \\
\le &
\sum_{i=1}^{\sM_n}
\frac{1}{\sM_n} 
(\Pr_{\Phi_n} (A_{n,i}) \Pr_{\Phi_n} (B_{n,i}|A_{n,i}) 
+ (1-\Pr_{\Phi_n} (A_{n,i}))
)\nonumber  \\
\stackrel{(a)}{\le}
 &
2^{-n\epsilon_5}
+
\sum_{i=1}^{\sM_n}
\frac{1}{\sM_n} 
(1-\Pr (A_{n,i})),\Label{N5}
\end{align}
where $(a)$ follows from Eq. \eqref{N2}.

The random variable $G^n(s, \Phi_n(i)|P)$ can be regarded as 
the $n$-fold i.i.d. extension of
the variable $G(s,X|P)$ whose expectation is $G(s,P)$.
Since the choice of $\epsilon_5$ guarantees that
\begin{align}
G(s,P) <
- R_1+s(R_3-\epsilon_2)-\epsilon_5,
\end{align}
we have
\begin{align}
1-\Pr_{\Phi_n} (A_{n,i})
=\Pr_{\Phi_n} \Big( G^n(s, \Phi_n(i)|P) \ge n (- R_1+s(R_3-\epsilon_2)-\epsilon_5) 
\Big) 
\to 0.
\Label{N3}
\end{align}
Hence, the combination of Eqs. \eqref{N5} and \eqref{N3}
implies the desired statement.
\endproof

\section{Conclusion}
We have  proposed a new concept, secure list decoding, 
which imposes additional requirements on top of those
of conventional list decoding.
This scheme has three requirements.
Verifiable condition (A),
Non-decodable condition (B), and Non-cheating condition.
Verifiable condition (A) means that the message sent by Alice (sender) is contained in the list output by Bob (receiver).
Non-decodable condition (B) means that
Bob cannot uniquely decode Alice's message.
Non-cheating condition has two versions. 
One is the condition (C) for honest Alice.
The other is the condition (D) for dishonest Alice.
Since there is a possibility that Alice uses a different code, 
we need to guarantee the impossibility of cheating 
even for such a dishonest Alice.
In this paper, we have shown the existence of a code to satisfy these three conditions.
Also, we have defined the capacity region as the possible rate pair of the rates of the message and the list,
and have derived the capacity region under a proper condition.
Then, we have clarified the capacity region under several conditions.
Fortunately, since additive noise channels satisfy these conditions, the capacity region is determined in this case.
In general channels, we have clarified only a part of the capacity region.
Therefore, the perfect characterization of the capacity region is an interesting open problem.

As another contribution, we have constructed a protocol for bit commitment
from the secure list decoding. 
However, this conversion protocol is not efficient. That is,
in this protocol, the conversion rate from the size of the secure list decoding to 
the size of  bit commitment is quite small.
Hence, it is a remaining problem to construct a more efficient conversion protocol.
Further, it is not clear whether there is a protocol to convert bit commitment to secure list decoding.
The existence of such a protocol is another interesting open problem.

Since the constructed code in this paper is not practical,
it is needed to construct practical codes for secure list decoding.
Fortunately, the existing study \cite{Guruswami} systematically constructed several types of 
codes for list decoding with their algorithms.
While their code construction is practical,
in order to use their constructed code for secure list decoding,
we need to clarify their security parameters, i.e., 
the non-decodable parameter $\delta_B$ and the non-cheating parameter $\delta_D$
in addition to the decoding error probability $\epsilon_A$.
It is a practical open problem to calculate these security parameters of their codes.

This paper evaluates the non-decodable parameter $\delta_B$.
Clearly, the mutual information between the message $M$ and Bob's received information $Y$ is not zero
because Bob knows that the message is one element in his obtained list $\{M_1,\ldots, M_{\sL}\}$.
Therefore, it is natural to evaluate the amount of leaked information for the message $M$ to Bob,  which can be evaluated by the mutual information.
Alternatively, as the amount of uncertainty for the message $M$ in Bob's side,
we can discuss the conditional entropy for the message $M$ 
conditioned with Bob's received information $Y$.
Since this kind of study will clarify how much secrecy can be kept in this protocol,
it is another challenging future study.


\section*{Acknowledgments}
The author is grateful to 
Dr. Vincent Tan, Dr. Anurag Anshu, and Dr. Naqueeb Warsi  
for helpful discussions.
In particular, Dr. Vincent Tan suggested Lemma \ref{LA1}
and Dr. Anurag Anshu did the relation with bit commitment.
The work reported here was supported in part by 
Fund for the Promotion of Joint International Research
(Fostering Joint International Research) Grant No. 15KK0007,
the JSPS Grant-in-Aid for Scientific Research 
(A) No.17H01280, (B) No. 16KT0017, 
and Kayamori Foundation of Informational Science Advancement.

\appendices

\section{Proof of Lemma \ref{LA1}}\Label{A01}
The region in the first line is rewritten as
\begin{align}
&\cup_{P\in {\cal P}({\cal U}\times {\cal X})} \{ (R_1,R_2) | 0 < R_1< H(X|U)_P,~
[R_1 - I(X;Y|U)_P]_+ <R_2<R_1\} \nonumber \\
=&\{ (R_1,R_2) | 0 < R_1< \log |{\cal X}|, ~(\inf_{P\in {\cal P}({\cal U}\times {\cal X})} \{ [R_1 - I(X;Y|U)_P]_+| R_1< H(X|U)_P\}) < R_2 < R_1 \} \nonumber  \\
=&\{ (R_1,R_2) | 0 < R_1< \log |{\cal X}|, ~(\min_{P\in {\cal P}({\cal U}\times {\cal X})} \{ [R_1 - I(X;Y|U)_P]_+| R_1\le H(X|U)_P\}) < R_2 < R_1 \} \nonumber  \\
=&\{ (R_1,R_2) | 0 < R_1< \log |{\cal X}|, ~[R_1 - \max_{P\in {\cal P}({\cal U}\times {\cal X})} \{I(X;Y|U)_P| R_1\le H(X|U)_P\}]_+ < R_2 < R_1 \}  .\Label{J1}
\end{align}
When $R_1 \le H_0$, we have
\begin{align}
 [R_1 - \max_{P\in {\cal P}({\cal U}\times {\cal X})} \{I(X;Y|U)_P| R_1\le H(X|U)_P\}]_+
=R_1 -C(W)= \kappa(R_1).
\Label{J2}
\end{align}
Since the set $\{(H(X|U)_P, I(X;Y|U)_P)\}_{P\in {\cal P}({\cal U}\times {\cal X})}$
is convex, 
the function $R_1 \mapsto \max_{P\in {\cal P}({\cal U}\times {\cal X})} \{I(X;Y|U)_P|$\par\noindent$ R_1\le H(X|U)_P\}$
is monotonically decreasing for $R_1 > H_0$.
Hence, when $R_1 > H_0$, 
we have
\begin{align}
& [R_1 - \max_{P\in {\cal P}({\cal U}\times {\cal X})} \{I(X;Y|U)_P| R_1\le H(X|U)_P\}]_+\nonumber \\
=& [R_1 - \max_{P\in {\cal P}({\cal U}\times {\cal X})} \{I(X;Y|U)_P| R_1= H(X|U)_P\}]_+\nonumber \\
=&\min_{P\in {\cal P}({\cal U}\times {\cal X})} \{ [R_1 - I(X;Y|U)_P]_+| R_1= H(X|U)_P\} \nonumber \\
=&\min_{P\in {\cal P}({\cal U}\times {\cal X})} \{ [H(X|U)_P - I(X;Y|U)_P]_+| R_1= H(X|U)_P\} \nonumber \\
=&\min_{P\in {\cal P}({\cal U}\times {\cal X})} \{ H(X|YU)_P | R_1= H(X|U)_P\} 
= \kappa(R_1).
\Label{J3}
\end{align}
Therefore, combining \eqref{J1}, \eqref{J2}, and \eqref{J3},
we find that the region in the first line equals the region in the second line.
\hspace*{\fill}~\QED\par\endtrivlist\unskip

\section{Proof of Lemma \ref{LA2}}\Label{A02}
We state the outline of this proof.
First, we show the following statement;
\begin{description}
\item[(S1)]
Given $R_1 \in [H_0,\log |{\cal X}|]$, 
there exists another distribution $P'\in {\cal P}({\cal U}\times \cX)$ to satisfy the following conditions.
(i) For any $u \in {\cal U}$ with $P_U'(u)>0$, we have $H(X)_{P_{X|U=u}'}\ge H_0$.
(ii) $\kappa(R_1) =H(X|YU)_{P'}$.
(iii) $R_1 = H(X|U)_{P'}$.
\end{description}
Next, we show the following statement for the distribution $P'\in {\cal P}({\cal U}\times \cX)$ given in (S1);
\begin{description}
\item[(S2)]
For any $u \in {\cal U}$ with $P_U'(u)>0$, we have 
$H(X|Y)_{P_{X|U=u}'} =\kappa( H(X)_{P_{X|U=u}'} )$.
\end{description}
Finally, we show the following statement;
\begin{description}
\item[(S3)]
For any $R_1 \in [H_0,\log |{\cal X}|]$,
the point $(R_1,\kappa(R_1))$ can be written as 
a convex combination of elements of 
$\{(H(X)_P,H(X|Y)_P)\}_{P \in {\cal P}_0}$,
\end{description}
which derives the required statement.

\noindent {\bf Step 1}: Proof of (S1).

\noindent 
To show (S1), it is sufficient to show the following statement;
Given $R_1 \in [H_0,\log |{\cal X}|]$, 
we choose a distribution $P\in {\cal P}({\cal U}\times \cX)$ such that
$R_1=H(X|U)_P$ and $\kappa(R_1)= H(X|YU)_P$.
Then, there exists another distribution $P'\in {\cal P}({\cal U}\times \cX)$ to satisfy the following conditions.
(i) For any $u \in {\cal U}$ with $P_U'(u)>0$, we have $H(X)_{P_{X|U=u}'}\ge H_0$.
(ii) $H(X|YU)_P \ge H(X|YU)_{P'}$.
(iii) $H(X|U)_P = H(X|U)_{P'}$.
Notice that the combination of (ii) and $\kappa(R_1)= H(X|YU)_P$ implies that 
$\kappa(R_1)= H(X|YU)_{P'}$ due to the definition of the function $\kappa$.

In the following, we show the existence of $P'$ to satisfy the above condition.
Due to the choice of $P_{XU}$, we have
\begin{align}
R_1=\sum_{u\in {\cal U}}P_U (u)H(X)_{P_{X|U=u}}, \quad
\kappa(R_1)= \sum_{u\in {\cal U}}P_U (u)H(X|Y)_{P_{X|U=u}},
\Label{KL1}
\end{align}
where $P_U$ is the marginal distribution of $P$ with respect to $U$, and
$P_{X|U=u}$ is the conditional distribution on $X$ with condition $U=u$.
Also, $P_{X|U\neq u}$ expresses the conditional distribution on $X$ with condition $U\neq u$.

Assume that there exists an element $u_0\in {\cal U}$ such that
$\lambda:= P_U(u_0)>0$ and $H(X)_{P_{X|U=u_0}}< H_0$.
Hence, we have 
\begin{align}
H(X|U)_P&=\lambda H(X)_{P_{X|U=u_0}}+(1-\lambda)H(X|U)_{P_{X|U\neq u_0}},\Label{KG1}\\
H(X|YU)_P&=\lambda H(X|Y)_{P_{X|U=u_0}}+(1-\lambda)H(X|YU)_{P_{X|U\neq u_0}},\Label{KG2} \\
H(X)_{P_{X|U=u_0}}&< H_0=H(X)_{P_{\max}} \le H(X|U)_{P_{X|U\neq u_0}}.\Label{KG2B} 
\end{align}
We choose $\lambda'\in [0,1]$ such that
\begin{align}
H(X|U)_P&=\lambda' H(X)_{P_{\max}}+(1-\lambda')H(X|U)_{P_{X|U\neq u_0}}.
\Label{KG3}
\end{align}

Since the relation $C(W)= I(X;Y)_{P_{\max}}$ yields 
\begin{align}
H(X)_{P_{\max}}-H(X|Y)_{P_{\max}} & \ge H(X)_{P_{X|U=u_0}}-H(X|Y)_{P_{X|U=u_0}}\nonumber \\
H(X)_{P_{\max}}-H(X|Y)_{P_{\max}} & \ge H(X|U)_{P_{X|U\neq u_0}}-H(X|YU)_{P_{X|U\neq u_0}},
\end{align}
we have the relations
\begin{align}
H(X|Y)_{P_{\max}}- H(X|Y)_{P_{X|U=u_0}} & \le H(X)_{P_{\max}}- H(X)_{P_{X|U=u_0}} \nonumber \\
H(X|YU)_{P_{X|U\neq u_0}}-H(X|Y)_{P_{\max}} & \ge H(X|U)_{P_{X|U\neq u_0}}-H(X)_{P_{\max}},
\end{align}
which imply that
\begin{align}
\frac{H(X|Y)_{P_{\max}}- H(X|Y)_{P_{X|U=u_0}}}{H(X)_{P_{\max}}- H(X)_{P_{X|U=u_0}}}  
\le 1\le 
\frac{H(X|YU)_{P_{X|U\neq u_0}}-
H(X|Y)_{P_{\max}}}{H(X|U)_{P_{X|U\neq u_0}}-H(X)_{P_{\max}}}.
\Label{KG4B}
\end{align}
The combination of \eqref{KG2B} and \eqref{KG4B} implies that
\begin{align}
\frac{H(X|YU)_{P_{X|U\neq u_0}}- H(X|Y)_{P_{X|U=u_0}}}{H(X|U)_{P_{X|U\neq u_0}}- H(X)_{P_{X|U=u_0}}}  \le 
\frac{H(X|YU)_{P_{X|U\neq u_0}}-H(X|Y)_{P_{\max}}}{H(X|U)_{P_{X|U\neq u_0}}-H(X)_{P_{\max}}}.
\Label{KG4}
\end{align}
Therefore, we obtain
\begin{align}
H(X|YU)_P
&\stackrel{(a)}{=}
\lambda H(X|Y)_{P_{X|U=u_0}}+(1-\lambda)H(X|YU)_{P_{X|U\neq u_0}} \nonumber \\
&\stackrel{(b)}{=}
H(X|YU)_{P_{X|U\neq u_0}}-
\frac{H(X|YU)_{P_{X|U\neq u_0}}- H(X|Y)_{P_{X|U=u_0}}}{H(X|U)_{P_{X|U\neq u_0}}- H(X)_{P_{X|U=u_0}}}  
(H(X|U)_{P_{X|U\neq u_0}}-H(X|U)_P)
\nonumber \\
&\stackrel{(c)}{\ge}
H(X|YU)_{P_{X|U\neq u_0}}-
\frac{H(X|YU)_{P_{X|U\neq u_0}}- H(X|Y)_{P_{\max}}}{H(X|U)_{P_{X|U\neq u_0}}- H(X)_{P_{\max}}}  
(H(X|U)_{P_{X|U\neq u_0}}-H(X|U)_P)
\nonumber \\
&\stackrel{(d)}{=}
\lambda' H(X|Y)_{P_{\max}}+(1-\lambda')H(X|YU)_{P_{X|U\neq u_0}},
\end{align}
where the steps $(a), (b), (c),$ and $(d)$ follow from 
\eqref{KG2}, \eqref{KG1}, \eqref{KG4}, and \eqref{KG3}, respectively.

This discussion shows that
a component $P_{X|U=u'}$ with $H(X)_{P_{X|U=u_0}}< H_0$
in the convex combination \eqref{KL1} can be replaced by $P_{\max}$.
Hence, we repeat the above procedure for all the elements $u \in {\cal U}$ such that
$P_U(u)>0$ and $H(X)_{P_{X|U=u}}\ge H_0$.
Then, we find another distribution $P'\in {\cal P}({\cal U}\times \cX)$ to satisfy the three required conditions.

\noindent {\bf Step 2}: Proof of (S2).

\noindent 

Next, we show (S2) by contradiction.
If there exists $u_1 \in {\cal U}$ such that $P_U'(u_1)>0$ and 
$H(X|Y)_{P_{X|U=u_1}'} >\kappa( H(X)_{P_{X|U=u_1}'} )$,
there exists another distribution $P''\in {\cal P}({\cal U'}\times \cX)  $
such that
$H(X|Y)_{P_{X|U=u_1}'} >H(X|YU')_{P''}$ and 
$H(X)_{P_{X|U=u_1}'} =H(X|U')_{P''}$.
Define the set ${\cal U''}:= ({\cal U}\setminus \{u_1\})\cup {\cal U'}$ and the distribution $P_{XU''}$
as
\begin{align}
P_{XU''}(x,u)=
\left\{
\begin{array}{ll}
P'(x,u) & \hbox{ when } u \in {\cal U} \\
P'_U(u_1)P''(x,u) & \hbox{ when } u \in {\cal U}' .
\end{array}
\right.
\end{align}
Hence, we have 
$H(X)_{P'}=\sum_{u\neq u_1 }P_U'(u)H(X)_{P_{X|U=u}'}+P_U'(u_1)H(X|U)_{P''}
=H(X|U'')_{P_{XU''}}$ and 
$H(X|Y)_{P'}>\sum_{u\neq u_1 }P_U'(u)H(X|Y)_{P_{X|U=u_1}'}+P_U'(u_1)H(X|YU)_{P''}=H(X|YU'')_{P_{XU''}}$.
Thus, we obtain the contradiction.

\noindent {\bf Step 3}: Proof of (S3).

\noindent 
We choose $P' \in {\cal P}({\cal U}\times {\cal X})$ given in (S1).
The statement (S2) implies the relations
$H(X|YU)_{P'}=\sum_{u \in {\cal U}}P_U'(u) H(X|Y)_{P_{X|U=u}'}$
and
$H(X|U)_{P'}=\sum_{u \in {\cal U}}P_U'(u) H(X)_{P_{X|U=u}'}$.
Thus, the point $(R_1,\kappa(R_1))=
 (H(X|U)_{P'},H(X|YU)_{P'})$ is written as
a convex combination of elements of \par
\noindent$\{(H(X)_P,H(X|Y)_P)\}_{P \in {\cal P}_0}$.
Hence, the desired statement is obtained.
\hspace*{\fill}~\QED\par\endtrivlist\unskip

\section{Proof of Lemma \ref{L4}}\Label{A1}
The relation \eqref{LLP2} follows from
\begin{align}
H(X^n) 
= \sum_{j=1}^n H( X_{j}|X^{j-1} )
\le \sum_{j=1}^n H( X_{j} ).\Label{LLP1}
\end{align}
Eq. \eqref{LLP3} follows from the relations;
\begin{align}
I(X^n; Y^n) = & H(Y^n)-H(Y^n|X^n) 
=H(Y^n)-\sum_{j=1}^n H(Y_j|X_j) \nonumber \\
\le & \sum_{j=1}^n H(Y_j)-H(Y_j|X_j) 
= \sum_{j=1}^n I(X_j; Y_j).
\end{align}
\hspace*{\fill}~\QED\par\endtrivlist\unskip

\bibliographystyle{IEEE}

\begin{thebibliography}{1}
\bibitem{MH17-1}
{M. Hayashi}, 
``Secure list decoding'' 
{\em Proc. IEEE International Symposium on Information Theory (ISIT2019),} 
Paris, France, July 7 -- 12, 2019, pp. 1727 -- 1731. 

\bibitem{Elias}
P. Elias,
``List decoding for noisy channels,''
in {\em WESCON Conv. Rec.,}
1957, pp. 94- 104.

\bibitem{Wo}
J.M. Wozencraft, ``List decoding,''
{\em Quart. Progr. Rep. Res. Lab. Electron.,} MIT, Cambridge, MA Vol. 48, 1958.

\bibitem{Nish}
S. Nishimura.
``The strong converse theorem in the decoding scheme of list size $L$,''
{\em K\={o}dai Math. Sem. Rep.}, {\bf 21}, 418--25, (1969).

\bibitem{Ah}
R. Ahlswede,
``Channel capacities for list codes,''
{\em J. Appl. Probab.}, vol. 10, 824--836, 1973.

\bibitem{Haya}
M. Hayashi,
``Channel capacities of classical and quantum list decoding,''
arXiv:quant-ph/0603031


\bibitem{KR94}
H. Krawczyk, in {\em Advances in Cryptology CRYPTO 1994}, Lecture
Notes in Computer Science (Springer-Verlag, New York, 1994),
Vol. 893, p. 129.

\bibitem{BC1}
 A. Winter, A. C. A. Nascimento, and H. Imai, 
 ``Commitment Capacity of Discrete Memoryless Channels,'' 
 {\em Proc. 9th IMA International Conferenece on Cryptography and Coding 
 (Cirencester 16-18 December 2003)}, 
 pp. 35-51, 2003.

\bibitem{BC2}
H. Imai, K. Morozov, A. C. A. Nascimento, and A. Winter, 
``Efficient Protocols Achieving the Commitment Capacity of Noisy Correlations,'' 
{\em Proc. IEEE ISIT 2006}, pp. 1432-1436, July 6-14, 2006.

\bibitem{BC3}
H. Yamamoto and D. Isami,
``Multiplex Coding of Bit Commitment Based on a Discrete Memoryless Channel,''
{\em Proc. IEEE ISIT 2007},  pp. 721-725, June 24-29, 2007.

\bibitem{KRS}
R. K\"{o}nig, R. Renner, and C. Schaffner, 
``The operational meaning of min- and max-entropy,'' 
{\em IEEE Trans. Inform. Theory}, {\bf 55}(9), 4337--4347 (2009)

\bibitem{Hay17} M. Hayashi, Quantum Information Theory: Mathematical Foundation, Graduate Texts in Physics, Springer (2017) (Second edition of Quantum Information: An Introduction Springer 2017).

\bibitem{BBCM}
C. H. Bennett, G. Brassard, C. Crepeau, and U.M. Maurer,
``Generalized privacy amplification,''
{\em IEEE Trans. Inform. Theory}, {\bf 41}, 1915--1923 (1995).

\bibitem{Tsuru}
T. Tsurumaru and M. Hayashi, ``Dual universality of hash functions and its applications to quantum cryptography,''
{\em IEEE Trans. Inform. Theory}, {\bf 59}(7), 4700--4717 (2013).

\bibitem{HILL}
J. H\r{a}stad, R. Impagliazzo, L. A. Levin, and M. Luby,
``A Pseudorandom Generator from any One-way Function,''
{\em SIAM J. Comput.} {\bf 28}, 1364 (1999).

\bibitem{H-leaked}
M. Hayashi, 
``Exponential decreasing rate of leaked information in universal random privacy amplification,''
{\em IEEE Trans. Inform. Theory}, {\bf 57}(6) 3989--4001, (2011).

\bibitem{Verdu}
S. Verd\'{u} and T. S. Han, 
``A general formula for channel capacity,'' 
{\em IEEE Trans. Inform. Theory}, {\bf 40}(6) 1147--1157, (1994).

\bibitem{HN}
M. Hayashi and H. Nagaoka, 
``General formulas for capacity of classical-quantum channels,'' 
{\em IEEE Trans. Inform. Theory}, 
{\bf 49}(7), 1753--1768 (2003).

\bibitem{Guruswami}
V. Guruswami,
``Algorithmic results in list decoding,'' 
{\em Foundations and Trends in Theoretical Computer Science}, 
Vol. 2, No. 2, 107--195 (2007).

\end{thebibliography}

\end{document}